\long\def\comment#1{}
\newfont{\bbb}{msbm10 scaled 700}
\newfont{\bb}{msbm10 scaled 1100}
\newcommand{\PP}{\mbox{\bb P}}
\newcommand{\FF}{\mbox{\bb F}}
\newcommand{\Ac}{{\cal A}}
\newcommand{\Ec}{{\cal E}}
\newcommand{\Fc}{{\cal F}}
\newcommand{\Gc}{{\cal G}}
\newcommand{\Uc}{{\cal U}}
\newcommand{\fsf}{{\sf f}}
\newcommand{\Gsf}{{\sf G}}
\newcommand{\Psf}{{\sf P}}
\newcommand{\be}{\begin{equation}}
\newcommand{\ee}{\end{equation}}
\newcommand{\bea}{\begin{eqnarray}}
\newcommand{\eea}{\end{eqnarray}}
\def\fsf{ {\sf f}}
\begin{document}
\newtheorem{ach}{Achievability}
\newtheorem{con}{Converse}
\newtheorem{definition}{Definition}
\newtheorem{theorem}{Theorem}
\newtheorem{lemma}{Lemma}
\newtheorem{example}{Example}
\newtheorem{cor}{Corollary}
\newtheorem{prop}{Proposition}
\newtheorem{conjecture}{Conjecture}
\newtheorem{remark}{Remark}
\title{Wireless Multihop Device-to-Device Caching Networks}
\author{\IEEEauthorblockN{Sang-Woon Jeon,~\IEEEmembership{Member,~IEEE}, Song-Nam Hong,~\IEEEmembership{Member,~IEEE}, Mingyue Ji,~\IEEEmembership{Member,~IEEE},\\
Giuseppe Caire,~\IEEEmembership{Fellow,~IEEE}, and Andreas F. Molisch,~\IEEEmembership{Fellow,~IEEE}}
\thanks{This work was supported in part by the Basic Science Research Program through the National Research Foundation of Korea (NRF) funded by the Ministry of Education, Science and Technology (MEST) [NRF-2013R1A1A1064955] and by the NSF Grants CCF 1161801 and EARS  1444060.}
\thanks{The material in this paper was presented in part at the IEEE Information Theory Workshop (ITW), Jerusalem, Israel, April 2015 and at the IEEE International Conference on Communications (ICC), London, United Kingdom, June 2015.}
\thanks{S.-W. Jeon is with the Department of Information and Communication Engineering, Andong National University, Andong, South Korea (e-mail: swjeon@anu.ac.kr).}%
\thanks{S.-N. Hong is with the Ericsson Research Lab., San Jose, CA 95118, USA (e-mail: songnam.hong@ericsson.com).}%
\thanks{M. Ji and A. F. Molisch are with the Ming Hsieh Department of Electrical Engineering,
University of Southern California, Los Angeles, CA 90089, USA (e-mail: mingyuej@usc.edu; molisch@usc.edu).}%
\thanks{G. Caire is with the Department of Telecommunication Systems, Technical
University of Berlin, Berlin 10623, Germany, and also with the Ming Hsieh
Department of Electrical Engineering, University of Southern California,
Los Angeles, CA 90089, USA (e-mail: caire@tu-berlin.de).}
}
\maketitle

\begin{abstract}
We consider a wireless device-to-device (D2D) network where $n$ nodes 
are uniformly distributed at random over the network area. 
We let each node with storage capacity $M$ cache files from a library of size $m \geq M$.
Each node in the network requests a file from the library independently at random, according to a popularity distribution, and
is served by other nodes having the requested file in their local cache via (possibly) multihop transmissions.
Under the classical ``protocol model'' of wireless networks,
we characterize the optimal per-node capacity scaling law for a broad class of heavy-tailed popularity distributions
including Zipf distributions with exponent less than one.
In the parameter regimes of interest, we show that a decentralized random caching strategy
with uniform probability over the library yields the optimal per-node capacity scaling of $\Theta(\sqrt{M/m})$, which is
constant with $n$, thus yielding throughput scalability with the network size.
Furthermore, the multihop capacity scaling can be significantly better than 
for the case of single-hop caching networks, for which the per-node capacity is $\Theta(M/m)$.
The multihop capacity scaling law can be further improved for a Zipf distribution with exponent 
larger than some threshold $> 1$, by using a decentralized random caching uniformly across a subset of most popular 
files in the library. Namely, ignoring a subset of less popular files (i.e., effectively reducing the size of the library)
can significantly improve the throughput scaling while guaranteeing that all nodes will be served with high probability as $n$ increases.
\end{abstract}

\begin{IEEEkeywords}
Caching, device-to-device networks, multihop transmission, scaling laws.
\end{IEEEkeywords}
 \IEEEpeerreviewmaketitle

\section{Introduction}  \label{intro}

Internet traffic has grown dramatically in recent years, mainly due to on-demand video streaming  \cite{CISCO}.
While wireless is by far the preferred way through which users connect to the Internet, today's cellular technology and
service providers do not support seamless cost-effective on-demand video streaming. For example, most monthly cellular data plans
would be completely consumed by a {\em single} streaming session of a standard definition movie from a typical services such as
Netflix, iTune, or Amazon Prime (duration ~1h:30, size ~2GB). It is evident that in order to fill in the gap between the users' expectation
and the limitations of the provided services, a dramatic technology paradigm shift is required.
In this perspective, it has been recently recognized that {\em caching at the wireless edge}, i.e., caching the content library directly in the wireless
nodes (femtocell base stations or user devices), has the potential of solving the problem of network scalability by providing per-node
throughput that scales much better than conventional unicast transmission, in a variety of scenarios.

One important feature of on-demand video streaming is that user demands are highly redundant over time and space.
As an example, consider a university campus where $n \approx 10000$ users (distributed over a surface of $\approx 1 km^2$) stream movies from a library of $\approx 100$ files, such as the weekly top-of-the chart titles of Netflix, iTune, or Amazon Prime. For such scenario, each user demand can be satisfied by local communication from a cache, without cluttering a cellular base station with thousands of unicast sessions, or without requiring to deploy
a large number of small cell access points, each requiring  costly high-throughput backhaul.
Intuitively, caching can effectively take advantage of the inherent redundancy of the user demands, although, differently from live streaming,
in on-demand streaming users do not request the same content at the same time (this type of redundancy is referred to in
\cite{shanmugam2013,ji2013optimal,ji2013throughput} as {\em asynchronous content reuse}).

\subsection{Related Work}

\subsubsection{Conventional ad-hoc networks}

Since the seminal work of Gupta and Kumar \cite{GuptaKumar:00},
the capacity scaling laws of wireless ad-hoc networks has been extensively studied
(e.g., \cite{kulkarni2004deterministic,li2009multicast,franceschetti2007closing}).
The model introduced in \cite{GuptaKumar:00} consists of $n$ nodes placed uniformly at random on a planar region
and grouped into source--destination (SD) pairs at random.
Assuming an interference avoidance constraint referred to as the \emph{protocol model} (see Section \ref{sec:PF}),
it was shown in \cite{GuptaKumar:00} that the per-node capacity must scale as $O(\frac{1}{\sqrt{n}})$ (i.e., upper bound).
Furthermore, a simple ``straight-line" multihop relaying scheme achieves the per-node throughput scaling of $\Omega(\frac{1}{\sqrt{n \log n}})$.
The same results were confirmed in \cite{kulkarni2004deterministic} by using a simpler and more general analysis technique.
Later, the $1/\sqrt{\log n}$ gap factor between converse and achievability was closed in \cite{franceschetti2007closing},
by showing that the per-node throughput scaling of $\Theta(\frac{1}{\sqrt{n}})$ is indeed achievable by using a more refined multihop strategy
based on percolation theory.

Beyond the protocol model, the capacity scaling law of wireless ad-hoc networks has been also studied in an information theoretic sense, considering
a {\em physical model} that includes distance-dependent propagation path-loss, fading, Gaussian noise, and signal interference
(e.g., \cite{xie2004network,xue2005transport,shakkottai2010multicast, niesen2010balanced,ozgur2007hierarchical}).
While the protocol model is scale-free, the physical model behaves differently depending on whether the network is
``extended'' (constant node density, with the network area growing as $\Theta(n)$), or ``dense'' (constant network area, with the node density growing as
$\Theta(n)$).  In \cite{xie2004network, xue2005transport}, the achievability schemes are based on multihop strategy, point-to-point coding, and treating interference as (Gaussian) noise.
For the extended network model, it was shown that if the path-loss exponent is greater than or equal to three, then the scaling law is the same as for the protocol model and the multihop strategy is sufficient to achieve
the optimal scaling. In contrast, for the extended network model with the path-loss exponent less than three and for the dense network model (in this case the path-loss exponent is irrelevant) the multihop strategy is suboptimal. In these cases, the {\em hierarchical cooperation} scheme
proposed in \cite{ozgur2007hierarchical} (see also improved and optimized hierarchical cooperation scheme in \cite{ozgur2010,hong2015beyond})
achieves an almost optimal throughput scaling within a factor of $n^{\epsilon}$, where $\epsilon$ can
be made arbitrarily small as the number of hierarchical stages increases.

\subsubsection{Caching networks}

Motivated by the considerations made at the beginning of this section,
wireless caching networks have been the subject of recent intensive investigation
\cite{shanmugam2013,Gitzenisr:13,
ji2013optimal,ji2013throughput,JiCaireMolisch2013,ji2013fundamental,maddah2013fundamental,maddah2014decentralized,niesen2013coded,ji2015random,ji2014caching, karamchandani2014hierarchical,hachem2014multi}.
The single-hop device-to-device (D2D) case was considered in \cite{ji2013optimal,ji2013throughput,JiCaireMolisch2013},
where $n$ user node request files from a library of $m$ files
according to a common demand or popularity distribution and each node has cache capacity constraint equal to the size
of $M \leq m$ files. The delivery scheme (i.e., the coordination of transmissions in order to serve the users' requests)
is restricted to be one-hop, i.e., either the requested file is found in the cache, or it is directly downloaded from a neighbor node
through a D2D wireless link. Under a Zipf popularity distribution \cite{Breslau:99} with parameter less than one and the protocol model of \cite{GuptaKumar:00},
it was shown in \cite{ji2013optimal,ji2013throughput,JiCaireMolisch2013} that the per-node throughput scales as $\Theta{(M/m)}$. This can be
achieved by an independent and random caching placement and a TDMA-based link scheduling scheme, at the expense of a positive outage probability,
due to the random nature of the caching placement scheme. However, in the relevant regime where $nM \gg m$, this outage probability can be kept under control,
i.e., the system can be designed in order to achieve any target outage probability $\epsilon > 0$, for sufficiently large $n$.
It is remarkable to notice that the per-node throughput in this case scales much better than
in the case of general ad-hoc networks under the protocol model. In fact, while in the general case the per-node throughput converges to zero
with the size of the network as $1/\sqrt{n}$, here it is constant with $n$ and directly proportional to the fraction of cached files $M/m$.
This much better scaling can be explained as an effect of the dense spatial spectrum reuse allowed by caching,
for which the requested content is found within a short communication radius, and therefore
a large number of simultaneous D2D links can be active on the same time slot.
Furthermore, an information theoretic study of the one-hop D2D caching network in the case of
worst-case arbitrary demands is provided in \cite{ji2014fundamental}, where the same throughput scaling
of $\Theta{(M/m)}$ is achieved through {\em inter-session network coded multicasting only} scheme, {\em spatial reuse only} scheme without inter session coding
as in \cite{ji2013optimal}, or a combination of both schemes.

A different  one-hop caching network topology has been studied  in~\cite{maddah2013fundamental,maddah2014decentralized,niesen2013coded,ji2015random}, where a single transmitter (i.e., a base station with all files in the library) serves $n$ user nodes through a common noiseless
link of fixed capacity (bottleneck link). The scheme proposed in \cite{maddah2013fundamental, maddah2014decentralized} partitions each file into packets  and each node stores subsets of packets from each file. This provides ``side information'' at each node such that, for the worst-case demands setting,
the base station can compute a multicast {\em network-coded} messages (transmitted via the common link) such that each node can decode
its own requested file from the multicast message and its cached side information.
Also in this case, the per-node throughput scaling under the worst-case arbitrary demands model
is again given by $\Theta{(M/m)}$, which is remarkably identical with the throughput scaling
achieved by single-hop D2D caching networks. In this case, the caching gain is explained in terms of  ``coded multicasting gain'', i.e., in the ability of turning unicast traffic into coded multicast traffic, such that one transmission satisfies multiple nodes. Further, when the user demands are random and
follow a Zipf distribution,  the order optimal average rate was characterized in \cite{ji2015random}. This
behaves as a function of all the system parameters including the number of users, the library size, the memory size and the popularity distributions.
Remarkably, in all the regimes of system parameters, the cache memory size $M$ can provide a {\em multiplicative gain}, which can be linear, sub-linear, or super-linear, depending on the cases.  A number of extensions, such as multiple number of requests, hierarchical network structures, and extension to multiple servers under various topology assumptions, can be found in \cite{ji2014caching,ji2015caching,karamchandani2014hierarchical,hachem2014multi,ji2015combination,shariatpanahi2015}.

\subsection{Contributions}
\label{sec: Contributions}
In this paper, we study a natural extension of the single-hop D2D network by allowing multihop transmission.
As a related work, a multihop transmission scheme for wireless caching networks has been studied in \cite{Gitzenisr:13} under the protocol model.
The key differences between the present paper and \cite{Gitzenisr:13} are as follows.
First, the main objective of \cite{Gitzenisr:13} is to minimize the average number of flows passing through each node. Such average number of
flows is proportional to the reciprocal of the average per-node throughput only for certain network model;
on the other hand, we directly derive the optimal scaling law of the per-node throughput.
Second, a centralized and deterministic caching placement was proposed in \cite{Gitzenisr:13} according to the popularity distribution;
in contrast, we present a completely decentralized random caching placement according to a uniform distribution over the whole
 file library, which is ``universal'' since it is independent of the specific popularity distribution.
Remarkably, while the placement and the achievability scheme of  \cite{Gitzenisr:13} would break under a node layout permutation,
such that one should re-allocate the cache content when the nodes are in the presence of node mobility, our scheme is robust since
any random permutation of the nodes would generate the same caching distribution, and therefore yields the same throughput scaling
with high probability.
Third, the file delivery scheme in  \cite{Gitzenisr:13} allows for multihop SD paths 
(i.e., between nodes caching a given file and nodes demanding such file) 
of the order of $\sqrt{n}$, i.e., the delivery paths are allowed to traverse the whole network.
In contrast,  in this paper we consider a more practical achievability scheme called {\em local multihop protocol}, where the number of
hops between any SD pairs are independent of the number of nodes and decreases when the storage capacity per node increases. 

The proposed caching placement and delivery scheme yield a per-node capacity scaling of $\Theta{(\sqrt{M/m})}$,
which is order-optimal when the popularity distribution has the ``heavy tail" property (see Definition \ref{def:distribution_class} in Section \ref{subsec:throughput}). For example, this is the case of a Zipf distribution with exponent less than one \cite{Breslau:99}.\footnote{Throughout the paper, an ``order-optimal''
scheme  means that it achieves the optimal throughput scaling law within a multiplicative gap of $n^{\epsilon}$ for any $\epsilon>0$.}
This result shows that multihop yields a much better per-node capacity scaling than single-hop D2D networks, which is given by $\Theta(M/m)$.
Furthermore, we show that for other popularity distributions, where the ``heavy tail" property is not satisfied or the user demands strongly concentrate,
a further improvement of the per-node throughput scaling beyond $\Theta{(\sqrt{M/m})}$ is achievable, similar to the case of single-hop D2D networks in \cite{Gitzenisr:13, ji2015random}.

\subsection{Paper Organization}

In Section~\ref{sec:PF}, we provide our network model and some definitions to be used throughout the paper. Section~\ref{sec:main} states the main results of this paper on the per-node capacity scaling laws for caching wireless D2D networks. In Section~\ref{sec:scheme}, we present an achievable scheme which is universal independently from a popularity distribution. An upper bound is provided in Section~\ref{sec:upper_bound}. In Section~\ref{sec:improved_throughput}, we further improve the throughput scaling laws for a Zipf distribution with exponent larger than a certain threshold. Some concluding remarks are provided in Section~\ref{sec:conc}.

\section{Problem Formulation}  \label{sec:PF}

In this section, we provide the model of the network under investigation and
define achievable throughput and system scaling regimes.
Generally speaking, for a sequence of events $\{E_n : n = 1,2, 3,\ldots\}$ we say that $E_n$ ``occurs with high probability'' (whp) 
if $\lim_{n \rightarrow \infty} \PP(E_n) = 1$, where it is understood that these events are defined in an appropriate probability space, with probability measure 
generally indicated by $\PP(\cdot)$.  
For notational convenience, let $\overset{\text{whp}}{\geq}$ and $\overset{\text{whp}}{\leq}$ denote that the corresponding
inequalities hold whp. We will also use the following order notations \cite{Knuth:76}.
\begin{itemize}
\item $f(n)=\operatorname{O}(g(n))$ if there exist $c>0$ and $n_0>0$ such that $f(n)\leq c g(n)$ for all $n\geq n_0$.
\item $f(n)=\Omega(g(n))$ if $g(n)=\operatorname{O}(f(n))$.
\item $f(n)=\Theta(g(n))$ if $f(n)=\operatorname{O}(g(n))$ and $g(n)=\operatorname{O}(f(n))$.
\end{itemize}

\subsection{Caching in Wireless Multihop D2D Networks}

We consider a wireless multihop D2D network consisting of a population $\Uc$ of $n = |\Uc|$ nodes, distributed uniformly 
and independently over a unit square area $[0,1] \times [0,1]$. 
Let $d(u,v)$ denote the distance between nodes $u,v \in \Uc$.
It is assumed that communication between nodes follows the {\em protocol model} of \cite{GuptaKumar:00}: the transmission from
node $u$ to node $v$ is successful if and only if: i) $d(u,v) \leq r$, and ii) no other active transmitter must be in a circle of radius
$(1+\Delta)r$ from the receiver node $v$. Here, $r,\Delta>0$ are given protocol parameters.
Also, each node sends its packets at some constant rate $W$ bits/s/Hz.

We consider a library ${\cal F} = \{W_1, \cdots, W_m\}$ of $m = |\Fc|$ files (information messages), such that messages $W_f$ 
are drawn at random and independently with a uniform distribution over a message set $\FF_2^B$ (binary strings of length $B$), 
for some arbitrary integer $B$. It follows that each file in $\Fc$ has entropy $H(W_f) = B$ bits.
Consistently with the current information theoretic literature on caching networks (see Section \ref{intro}),
a {\em caching scheme} is formed by two phases: caching placement and delivery.  
The file library is generated, and then maintained fixed for a long time. Each network node (user) has an on-board cache memory 
of capacity $MB$ bits, i.e., expressed in ``equivalent file-size'' the cache capacity is equal to $M$ files. 
The problem consists of storing information in the caches such that the delivery is as efficient as possible.
It is important to note that the caching placement phase is performed beforehand, when the file library is generated. Then, 
each node $u \in \Uc$ demands a file with index $f_u \in \{1, \cdots, m\}$, 
and the network must coordinate transmissions
(in particular, in this paper we consider multihop D2D operations according to the above defined protocol model), such that each 
demand is satisfied, i.e., each user $u$ is able to decode its desired files $f_u$ from the content of its own cache and from what it receives 
from the other nodes.

In general, the caching phase is defined by a collection of $n$ maps $Z_u : \FF_2^{Bm} \rightarrow \FF_2^{BM}$, 
such that $Z_u(\Fc)$ is the content of the cache at node $u \in \Uc$. Notice that 
the  cache content  is independent of the 
demand vector $(f_1, \cdots, f_n)$, reflecting the fact that the caching phase is performed beforehand. 
In this sense, the caching placement can be regarded as part of the ``code set-up''.
In the achievability strategies considered in this paper we consider only 
caching of entire files ($M$ files per node).  As a result, as in \cite{shanmugam2013,ji2013throughput,Gitzenisr:13},
the parameter $B$ (file size) is irrelevant for our achievability results.\footnote{Notice that this is not the case for other schemes such as in \cite{ji2014fundamental,maddah2013fundamental,maddah2014decentralized}, 
where the file size plays an important role (see \cite{shanmugam2014}.} 

Restricting caching to entire files, a caching placement realization is uniquely defined by a bipartite graph
$\Gsf = (\Uc, \Fc, \Ec)$  with  ``left'' nodes $\Uc$, ``right'' nodes $\Fc$ and edges $\Ec$ 
such that $(u,f) \in \Ec$ indicates that file $W_f$ is assigned to the cache of node $u$.
A bipartite cache placement graph $\Gsf$ is feasible if the degree of each node $u \in \Uc$ is not larger than the cache constraint $M$.
Let $\Gc$ denote the set of all feasible bipartite graphs $\Gsf$. 
Then, we define a random cache placement as a probability mass function $\Pi_c$ over $\Gc$. In particular, if $\Pi_c$ is induced by 
randomly and independently assigning $M$ files to each user node $u \in \Uc$, we say that the cache placement is 
``decentralized''. For a decentralized caching placement, each user node chooses its own $M$ files independently of the other nodes. 

After the caching functions are computed and the result is stored in the user nodes' caches, 
the network is repeatedly used in {\em rounds}. At each round, each node requests a file in the library, and the network 
must satisfy such requests. Since the network resets itself at the end of each delivery cycle, by the renewal--reward theorem \cite{Tijms:03}
the per-node throughput is given by the reciprocal of the time needed to deliver the files
(up to a multiplicative constant that depends on $W$, on the system bandwidth, and  on the file size).
Two models for the user demands have been investigated in the literature:
{\em arbitrary} and {\em random}. In the first case, the users' demand vector $(f_1, \cdots, f_n)$ is arbitrary, and the delivery time is defined for the 
worst-case demand configuration \cite{maddah2013fundamental,maddah2014decentralized,ji2013fundamental}. 
In the second case, the demands are generated at random and the delivery time is averaged over the users' demand distribution
\cite{shanmugam2013,ji2013optimal,ji2013throughput,ji2015random}. In this paper, we consider the random demands setting. In particular, we assume that
the users' demands are independently and uniformly distributed according to a common probability mass function $\{p_{r}(f): f \in \{1, \cdots, m\}\}$. 
The probability mass function $p_{r}(\cdot)$ is referred to in the following as the {\em popularity distribution}.
Without loss of generality, we assume a descending order between request probabilities, i.e, $p_{r}(i)\geq p_{r}(j)$ if $i\leq j$ for $i,j\in \{1, \cdots, m\}$.
For instance, a Zipf popularity distribution with exponent $\gamma>0$ is defined by $p_{r}(i) = \frac{i^{-\gamma}}{\sum_{j=1}^m j^{-\gamma}}$ 
for $i \in \{1, \cdots, m\}$ \cite{Breslau:99}. 

In the following, all events regarding a network of size $n = 1, 2, 3, \ldots$ are defined on a common probability space generated by the
random placement of the nodes, indicated by $\Psf$, the random placement of the caches, indicated by $\Gsf$, and the random 
demand vector, indicated by $\fsf$.

\subsection{Achievable Throughput and System Scaling Regime} \label{subsec:throughput}

In order to study capacity scaling of the caching wireless multihop D2D network defined before, we consider $m$ and $M$ expressed as functions of $n$ as
\begin{equation} \label{eq:relation_parameters}
m = a_1 n^{\alpha} \mbox{ and } M  =  a_2 n^{\beta},
\end{equation}
where $\alpha,a_1,a_2> 0$ and $\beta\in[0,\alpha]$.
We assume that $a_1>a_2$ if $\alpha=\beta$ because the delivery phase becomes trivial if $\alpha=\beta$ and $a_1\leq a_2$ (each node is able to store the 
entire library $\mathcal{F}$ for this case).

Before entering the analysis, it is important to clearly define the concept of outage event and symmetric throughput. 
For a given node placement $\Psf$, cache placement $\Gsf$, and demand vector $\fsf$, a feasible delivery strategy consists of a sequence of 
activation sets, i.e., sets of active transmission links, $\{\Ac_t : t = 1, 2, 3, \ldots \}$, such that 
at each time $t$ the active links in $\Ac_t$ do not violate the protocol model. For a given feasible delivery strategy, we let 
$T_n$ denote the corresponding per-node symmetric throughput, i.e., the rate (in bit/s/Hz) at which the 
request of any node in the network can be served with vanishing probability of error, as $B \rightarrow \infty$.
If for some node $u \in \Uc$ the message probability of error is lower bounded 
by some positive constant for all $B$, we say that the network is in outage. In this case, conventionally, we let $T_n = 0$.  

A sufficient condition for outage is that there exists some $u \in \Uc$ for which $W_{f_u}$ cannot be 
reconstructed from the whole cache content $\{ Z_v : v \in \Uc\}$.  Within the assumptions of our model, 
it is easy to see that the above condition also necessary. In fact, by contradiction, notice that 
if for all $u \in \Uc$ the requested message $W_{f_u}$ can be reconstructed from $\{ Z_v : v \in \Uc\}$,  
then there exists some delivery strategy that conveys all the cache messages to all
the user nodes by an appropriate multihop schedule, such that all nodes can decode their own desired file. 
This is an immediate consequence of the fact that the transmission in any single active link of the network is error-free, 
and that any node can communicate with any other node, by letting the transmission radius $r$ sufficiently large. 
Of course, conveying the global cache content to all nodes may take a very long delivery time, yielding low throughput. As a matter of fact,  
studying the behavior of the optimal $T_n$ as $n \rightarrow \infty$ is precisely the goal pursued in the rest of this paper. 

From what said above, $T_n$ is a random variable, function of $\Psf$, $\fsf$, and $\Gsf$. In general, the cumulative distribution function of $T_n$ takes on the form:
\[ F_{T_n}(x) =  \PP(T_n = 0) u(x) + F_n^+(x) \]
where $u(x)$ is the (right-continuous) unit-step function with jump at $x = 0$, the probability mass at 0, $\PP(T_n = 0)$, is the outage probability, 
and $F_n^+(x)$ is some right-continuous non-decreasing function of $x$ continuous at $x = 0$, such that 
$\lim_{x \rightarrow +\infty}  F_n^+(x) = 1 - \PP(T_n = 0)$. 

For a given delivery strategy,  we say that no outage occurs whp if $\lim_{n \rightarrow \infty} \PP(T_n = 0) = 0$. 
In addition, we say that a deterministic sequence $\{g^{\rm lb}_n\}$ is achievable if 
$T_n  \overset{\text{whp}}{\geq} g^{\rm lb}_n$. Also, a throughput upper bound whp is defined by 
a deterministic sequence $\{g^{\rm ub}_n\}$ such that 
$T_n  \overset{\text{whp}}{\leq} g^{\rm ub}_n$.  This leads to our definition of achievable throughput scaling laws:

\begin{definition}[Throughput Scaling Law: Achievability] \label{def:throughput}
Given a deterministic sequence $\{g_n^{\rm lb}\}$,  the scaling law $T_n = \Omega(g^{\rm lb}_n)$ is achievable whp if 
there exists a cache placement strategy and delivery protocol such that $T_n  \overset{\text{whp}}{\geq} g^{\rm lb}_n$ 
and $\lim_{n \rightarrow \infty} \PP(T_n = 0) = 0$. \hfill$\lozenge$
\end{definition}

\begin{definition}[Throughput Scaling Law: Converse] \label{def:throughput-converse}
Given a deterministic sequence $\{g_n^{\rm ub}\}$, we say that $T_n = O(g^{\rm ub}_n)$ is a converse throughput scaling law 
whp if for any cache placement strategy and delivery protocol $T_n  \overset{\text{whp}}{\leq} g^{\rm ub}_n$. \hfill$\lozenge$
\end{definition}

Obviously, a tight characterization of the throughput scaling law is obtained when
$T_n = \Omega(g^{\rm lb}_n)$ is achievable whp, and we can exhibit a converse whp $T_n = O(g^{\rm ub}_n)$ such that
when $g_b^{\rm ub} = \Theta( g_n^{\rm lb})$.

\section{Main Results}\label{sec:main}

This section states the main results of this paper.
We first introduce throughput scaling laws of caching wireless \emph{multihop} D2D networks achievable for any popularity distribution in Theorem \ref{thm:main_result} and compare with those of caching wireless \emph{single-hop} D2D networks.
In Theorem \ref{thm:main_result-converse}, we then establish upper bounds on throughput scaling laws for a class of heavy-tailed popularity distributions. In Theorem~\ref{thm:improved_throughput}, we further improve the throughput scaling laws achievable for a Zipf popularity distribution when its exponent is larger than a certain threshold. For ease of exposition, we partition the entire parameter space into five regimes as follows:
\begin{itemize}
\item Regime I: $\alpha-\beta>1$.
\item Regime II: $\alpha-\beta=1$ and $a_1>a_2$.
\item Regime III: $\alpha-\beta=1$ and $a_1\leq a_2$.
\item Regime IV: $\alpha-\beta\in(0,1)$.
\item Regime V: $\alpha-\beta=0$ and $a_1>a_2$.
\end{itemize}
Notice that shifting from Regimes I to V tends to increase the relative caching capability at each node, compared to the library size (recall the relation between $m$ and $M$ in \eqref{eq:relation_parameters}).

The following scaling laws hold \emph{universally} for any  popularity distribution.
\begin{theorem} \label{thm:main_result}
For the caching wireless D2D network defined in Section \ref{sec:PF}, the achievable throughput satisfies whp the scaling laws:
\begin{align} \label{eq:achievable_multihop}
T_{n}=\begin{cases}
0 &\mbox{for Regimes I and II},\\
\Omega(n^{-\frac{1}{2}-\epsilon})&\mbox{for Regime III},\\
\Omega(n^{-\frac{\alpha-\beta}{2}-\epsilon})&\mbox{for Regime IV},\\
\Omega(n^{-\epsilon})&\mbox{for Regime V},
\end{cases}
\end{align}
where $\epsilon > 0$ is arbitrarily small.
\end{theorem}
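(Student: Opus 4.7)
\textbf{The plan} is to exhibit a single universal achievability scheme---decentralized uniform random caching (seeded when necessary to guarantee coverage) combined with a local multihop delivery protocol and a spatial TDMA schedule---and to show that its per-node rate realizes each of the five cases of \eqref{eq:achievable_multihop}. The $T_n=0$ statements of Regimes~I and II will fall out of this analysis as a degenerate case, because the aggregate cache $nM$ is strictly below the library size $m$.

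\textbf{Caching and cluster coverage.} Each node $u$ independently picks an $M$-subset of $\Fc$ uniformly, so that $\PP(W_f\in Z_u)=M/m$. Tessellate $[0,1]^2$ into square clusters of area $s_n$. A Chernoff estimate over $\Psf$ gives $\Theta(ns_n)$ nodes per cluster whp whenever $ns_n=\omega(\log n)$; conditional on that, the probability that a specific file is absent from a specific cluster is at most $(1-M/m)^{\Theta(ns_n)}\leq\exp\!\bigl(-\Omega(ns_nM/m)\bigr)$, and a union bound over the $\Theta(m/s_n)$ (cluster, file) pairs certifies the event ``every file is present in every cluster'' whp as soon as
\begin{equation}
s_n \;\geq\; c_0\,\frac{m\log n}{nM}.
\end{equation}
In Regimes~I and II, $m/(nM)$ is either $\Theta(n^{\alpha-\beta-1})\to\infty$ or the constant $a_1/a_2>1$, so the lower bound on $s_n$ exceeds the unit area and no admissible $s_n$ exists. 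A pigeonhole over any placement leaves $\Theta(m)$ files stored nowhere, so under any popularity a positive-probability request falls on an uncached file, giving $T_n=0$. In Regimes~IV and V the inequality produces a valid $s_n=o(1)$ and coverage is automatic whp. In the borderline Regime~III, where $nM/m=a_2/a_1$ is only a constant, I pre-assign each of the $m$ files to a distinct node (feasible since $m\leq nM$) and fill the remaining slots uniformly; global coverage then holds by construction, and the marginal $\PP(W_f\in Z_u)=M/m$ is preserved up to a vanishing perturbation.

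\textbf{Local multihop delivery and throughput.} Inside each cluster, subdivide further into cells of area $\Theta((\log n)/n)$ and choose the transmission radius $r=\Theta(\sqrt{(\log n)/n})$, so that every cell contains $\Theta(\log n)$ nodes whp and adjacent cells are within range. Each user routes its request to the nearest copy of its desired file along a straight-line cell path, one hop per cell; this gives $h=O(\sqrt{s_nn/\log n})=O(\sqrt{m/M})$ hops in Regimes~IV and V and $h=O(\sqrt{n/\log n})$ hops in Regime~III (whose copies are spread across the whole network by the seeded placement). A constant-factor spatial TDMA respecting the $(1+\Delta)r$ exclusion zone of the protocol model yields $\Lambda=\Theta(n/\log n)$ simultaneous links. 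Because a round consists of $n$ requests each using $h$ link transmissions,
\begin{equation}
T_n \;=\; \Omega\!\left(\frac{W\Lambda}{n h}\right) \;=\; \Omega\!\left(\frac{W}{h\log n}\right),
\end{equation}
so that substituting $M/m=(a_2/a_1)n^{-(\alpha-\beta)}$ and absorbing the polylogarithmic slack into $n^\epsilon$ recovers $\Omega(n^{-1/2-\epsilon})$, $\Omega(n^{-(\alpha-\beta)/2-\epsilon})$, and $\Omega(n^{-\epsilon})$ in Regimes~III, IV, V, respectively.

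\textbf{Main obstacle.} The chief difficulty is the joint whp control of the three random inputs $(\Psf,\Gsf,\fsf)$ on a common probability space: a \emph{single} pair $(s_n,r)$ has to certify simultaneously (a) the correct node counts per cluster and per cell, (b) full coverage of every demanded file by the appropriate neighbourhood, (c) a feasible non-interfering TDMA colouring under the protocol model, and (d) vanishing outage in the sense of Definition~\ref{def:throughput}. The borderline Regime~III is the tightest of these, because $nM/m$ is only a constant and the plain Chernoff/union-bound coverage estimate just fails; this is exactly what forces the seeded-caching step above (equivalently, the $n^\epsilon$ slack surviving into the final rate). Beyond this coordination, the remainder is a direct adaptation of the standard protocol-model multihop calculation à la Gupta--Kumar.
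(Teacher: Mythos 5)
Your overall architecture matches the paper's: decentralized uniform random caching with traffic cells (``clusters'') of area $\Theta(m\log n/(nM))$, a local straight-line multihop over hopping cells of area $\Theta(\log n/n)$, constant-reuse TDMA under the protocol model, and a centralized/seeded placement for the borderline Regime III --- and your parameter choices reproduce the correct exponents. The genuine gap is in the throughput accounting step. The identity $T_n=\Omega(W\Lambda/(nh))$ divides total link capacity by total work ($n$ flows of $h$ hops each) and is therefore only an \emph{average}-load argument; to turn it into a per-flow rate guarantee under round-robin you must show whp that the \emph{maximum} number of data paths traversing any single hopping cell is within an $n^{\epsilon}$ factor of the average $\Theta(h\log n)$. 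This is not ``a direct adaptation of the standard Gupta--Kumar calculation,'' because here a single cache node can be the designated source for polynomially many requesters (in the paper's notation, up to $n^{1-\eta-(\alpha-\beta)+\epsilon}$ destinations per source in Regime IV), so the flows are not independent unit-multiplicity SD pairs and hot spots are a real possibility. The paper spends two lemmas on exactly this: a relative-entropy/Chernoff bound on the number of destinations sharing a common source node, and then a union bound over hopping cells combining that multiplicity with the $\Theta(n^{(1-\eta)/2}\sqrt{\log n})$ nodes whose straight-line paths can cross a given cell. Your proposal asserts the conclusion of these lemmas without proving either, and with your ``nearest copy'' source-selection rule (rather than the paper's uniform choice among in-cell holders) the multiplicity analysis would need to be redone, since nearest-copy assignment can concentrate load differently than uniform assignment.

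Two smaller remarks. First, for Regimes I and II the theorem's claim $T_n=0$ is an \emph{achievability} statement and is trivially true; no argument is needed. Your pigeonhole argument that ``under any popularity a positive-probability request falls on an uncached file'' is actually a converse-type claim and is false for general popularity distributions (a distribution concentrated on fewer than $nM$ files admits outage-free placement); in the paper that converse is Lemma~\ref{lemma:outage1} and requires the heavy-tail assumption of Definition~\ref{def:distribution_class}, so it does not belong to Theorem~\ref{thm:main_result}. Second, your coverage event (``every file present in every cluster'') is stronger than what the paper certifies (every node finds its own requested file in its cluster), which costs nothing at the level of exponents but does make the union bound depend on $m/s_n$ rather than $n$; this is fine, just worth stating, since it is what makes your scheme demand-independent.
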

\begin{proof}
The lower bound for Regimes I and II is trivial.
For the non-trivial part, the proof for Regimes IV and V
is given in Section \ref{subsec:scheme1} and for Regime III in Section \ref{subsec:scheme2}.
\end{proof}

\begin{cor} \label{co:singlehop_ach}
Consider the caching wireless D2D network defined in Section \ref{sec:PF}. If the file delivery is restricted to single-hop transmission, then the achievable throughput satisfies whp the scaling laws:
\begin{align} \label{eq:achievable_singlehop}
T_{n}=\begin{cases}
0 &\mbox{for Regimes I and II},\\
\Omega(n^{-1})&\mbox{for Regime III},\\
\Omega(n^{-(\alpha-\beta)-\epsilon})&\mbox{for Regime IV},\\
\Omega(n^{-\epsilon})&\mbox{for Regime V},
\end{cases}
\end{align}
where $\epsilon > 0$ is arbitrarily small.
\end{cor}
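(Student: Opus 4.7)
The plan is to design for each regime a single-hop caching-and-delivery protocol that matches the claimed scaling. The core idea is a decentralized random caching placement paired with a cell-based TDMA schedule, in direct analogy with the multihop proof of Theorem~\ref{thm:main_result} but with the intra-cell multihop routing replaced by a single direct hop. For Regimes I and II the statement $T_n=\Omega(0)$ is vacuous, so no scheme is needed and the trivial lower bound suffices.

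For Regimes IV and V, I would use a decentralized caching placement in which each node draws $M$ files uniformly and independently at random from $\mathcal{F}$. Partition the unit square into square cells of side length $r=\sqrt{c\,m\log n/(nM)}$ for a suitably large constant $c>1$. A standard Chernoff bound gives that every cell contains $\Theta((m/M)\log n)$ nodes whp. For a fixed cell and a fixed file $f$, the probability that no node in the cell caches $f$ is at most $(1-M/m)^{c(m/M)\log n}\le n^{-c}$, so a union bound over the $n$ demands shows that every request can be served from within its own cell with probability $1-O(n^{1-c})$. Each cell then runs single-hop TDMA, and the protocol-model exclusion rule allows a constant fraction of cells (those separated by $(1+\Delta)\sqrt{2}\,r$) to transmit simultaneously. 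Since each active cell has $\Theta((m/M)\log n)$ requesters served one at a time, the per-node throughput scales as $\Theta(M/(m\log n))$, which after absorbing the $\log n$ factor into $n^{\epsilon}$ matches $\Omega(n^{-(\alpha-\beta)-\epsilon})$ in Regime IV and $\Omega(n^{-\epsilon})$ in Regime V.

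Regime III must be treated separately because there $(m/M)\log n=\Theta(n\log n)$, and so the cell area mandated by the uniform random-caching coverage argument would exceed the whole unit square. Instead, exploit the inequality $Mn\ge m$ and use a deterministic placement that assigns the $m$ files to the node indices cyclically, so that each file is cached by at least $\lfloor Mn/m\rfloor\ge 1$ distinct nodes. Every request can then be fulfilled by a pre-identified source somewhere in the network. Setting the transmission radius to $r=\sqrt{2}$ ensures any source can reach any destination in a single hop, and serving the $n$ requests by TDMA (with, in the worst case, no spatial reuse) yields per-node throughput $\Omega(1/n)$ as required.

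The main obstacle is the coverage-by-cell argument in Regimes IV and V. The logarithmic slack in the cell-size formula is precisely what the union bound over requests demands, and it accounts for the unavoidable $n^{\epsilon}$ loss in those regimes. Concentration of node counts inside each cell and constant-factor spatial reuse under the protocol model are routine once the cell scale is fixed. The more delicate point is recognizing that Regime III is not a continuous limit of the cell-based construction and therefore requires a different, deterministic placement; the weakness of the $\Omega(1/n)$ bound there reflects the near-complete absence of spatial reuse when the network holds only a constant number of copies of each file.
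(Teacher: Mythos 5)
Your proposal is correct and follows essentially the same route as the paper: decentralized uniform random caching with traffic cells sized so that every request is served within its own cell whp (the paper's Lemma \ref{lem:outage}, with cell area $n^{-\eta}$, $\eta=1-(\alpha-\beta)-\epsilon'$, in place of your $c(m/M)\log n/n$), followed by intra-cell single-hop TDMA with constant-factor spatial reuse, and a separate centralized full-coverage placement with global time-sharing for Regime III. The only differences are cosmetic (a $\log n$ versus $n^{\epsilon'}$ slack in the cell size, and an explicit cyclic placement versus the paper's centralized random one), so the argument matches the paper's proof in substance.
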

\begin{proof}
The proof is given in Section \ref{subsec:singlehop_ach}.
\end{proof}

\begin{figure}
\begin{center}
\includegraphics[scale=1]{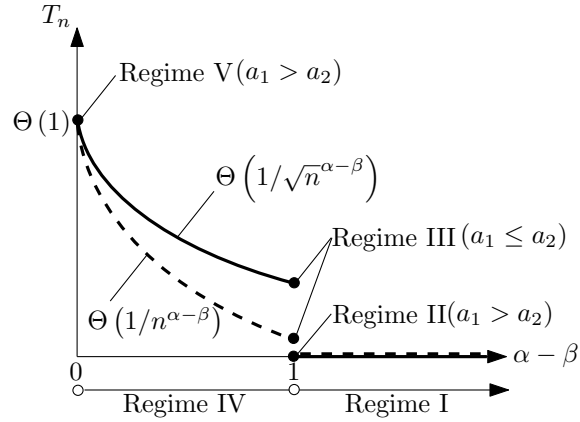}
\end{center}
\caption{Achievable throughput scaling laws in \eqref{eq:achievable_multihop} for caching wireless multihop D2D networks (solid curve) and \eqref{eq:achievable_singlehop} for caching wireless single-hop D2D networks (dashed curve).}
\label{fig:result}
\end{figure}

Fig.~\ref{fig:result} compares the achievable throughput scaling laws of the caching wireless D2D network between multihop and single-hop file deliveries in \eqref{eq:achievable_multihop} and \eqref{eq:achievable_singlehop}, respectively, where we omitted the term $n^{-\epsilon}$ for simplicity. Regimes I and II correspond to the case where the overall cached files in the entire network is strictly less than the number of files in the library, i.e., $Mn<m$. Thus, an outage is inevitable even if a centralized caching is used, which results in $T_n=0$.
As the relative caching capability increases compared to the library size, i.e., $\alpha-\beta$ decreases, each node can find its requested file in the network and thus, a non-zero $T_n$ is achievable for Regimes III and IV.
As it will be clear from the achievability delivery strategies of Section \ref{sec:scheme}, the geometric interpretation of this behavior is as follows:
as $\alpha-\beta$ decreases (i.e., the storage capacity $M$ increases),
the file delivery distance decreases, such that the network achieves larger and larger spatial reuse (multiple links can be active at the same time,
compatibly with the protocol model).
As a result, $T_n$ increases as $\alpha-\beta$ decreases for both \eqref{eq:achievable_multihop} and \eqref{eq:achievable_singlehop}.
Finally when $\alpha=\beta$ (i.e., Regime V), each node can find its requested file from its nearest neighbors. Thus, the delivery distance is
$O(1/\sqrt{n})$ and $T_n = \Theta(1)$ is achievable.

One of the most important facts is that single-hop file delivery is order-optimal only for Regime V.
For almost all parameter space of interest (Regimes III and IV), multihop file delivery significantly improves the throughput
by a factor $\sqrt{\frac{m}{M}}$. Intuitively, spatial reuse is much more effective with multihop transmissions, namely, we can have more
concurrent transmissions in the network. At the same time, the cost of duplicated transmissions by multihop is not very significant comparing
with the gains obtained by the simultaneously active links. It is worthwhile to mention that, for a Zipf popularity distribution with $\gamma < 1$,
our result is well matched with that in \cite{Gitzenisr:13}, even if we use a random caching and a local multihop schemes (see Section \ref{sec:scheme})
rather than a centralized caching and a possibly ``whole-network traversing" multihop schemes presented in \cite{Gitzenisr:13}.
Furthermore, due to the universality of the proposed scheme (random and independent caching), the same throughput scaling laws in Theorem \ref{thm:main_result} and Corollary \ref{co:singlehop_ach} are achievable for random mobile networks since the network caching distribution is invariant with respect to node permutation.

In order to establish upper bounds on throughput scaling laws, we define a class of popularity distributions with the ``heavy tail" property.
\begin{definition}[Heavy-tailed popularity distributions] \label{def:distribution_class}
Define a class of popularity distributions such that, for any $0<c_1<a_1$, there exists $c_2>0$ satisfying that
\begin{align}
\lim_{n\to\infty}\sum_{i=1}^{c_1n^{\alpha}}p_r(i)\leq 1-c_2,
\end{align}
where $c_1$ and $c_2$ are some constants and independent of $n$.
\hfill$\lozenge$
\end{definition}

\begin{lemma}
The Zipf distribution with exponent less than one (i.e., $\gamma \leq 1$) \cite{Breslau:99} satisfies the condition in Definition \ref{def:distribution_class}.
\end{lemma}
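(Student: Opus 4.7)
The plan is to reduce the claim to a direct asymptotic estimate of the normalized partial sums of the Zipf weights via the standard integral approximation for $\sum_{i=1}^N i^{-\gamma}$. Let $H_m^{(\gamma)}=\sum_{j=1}^m j^{-\gamma}$, so that $p_r(i)=i^{-\gamma}/H_m^{(\gamma)}$ with $m=a_1 n^\alpha$. Then
\[
\sum_{i=1}^{c_1 n^\alpha} p_r(i) \;=\; \frac{H_{\lfloor c_1 n^\alpha\rfloor}^{(\gamma)}}{H_{a_1 n^\alpha}^{(\gamma)}},
\]
so the task is to bound this ratio asymptotically away from $1$, uniformly in $n$, whenever $0<c_1<a_1$.

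First, I would focus on the generic regime $\gamma<1$. Using the monotonicity of $x\mapsto x^{-\gamma}$, I bracket the partial sum by integrals: for any $N\ge 1$,
\[
\int_{1}^{N+1} x^{-\gamma}\,dx \;\le\; \sum_{i=1}^{N} i^{-\gamma} \;\le\; 1+\int_{1}^{N} x^{-\gamma}\,dx,
\]
and evaluating the integrals gives $\sum_{i=1}^{N} i^{-\gamma} = \frac{N^{1-\gamma}}{1-\gamma}(1+o(1))$ as $N\to\infty$. Applying this to both numerator and denominator with $N=\lfloor c_1 n^\alpha\rfloor$ and $N=a_1 n^\alpha$ respectively, the $1/(1-\gamma)$ prefactors cancel and the $o(1)$ correction is negligible, leaving
\[
\lim_{n\to\infty}\frac{H_{\lfloor c_1 n^\alpha\rfloor}^{(\gamma)}}{H_{a_1 n^\alpha}^{(\gamma)}} \;=\; \Bigl(\frac{c_1}{a_1}\Bigr)^{1-\gamma}.
\]
Since $c_1<a_1$ and $1-\gamma>0$, the limit is strictly less than one, so taking any
\[
c_2 \;\in\; \Bigl(0,\,1-(c_1/a_1)^{1-\gamma}\Bigr)
\]
works for all sufficiently large $n$, verifying Definition \ref{def:distribution_class}.

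The marginal case $\gamma=1$ is the main subtle point. Using $H_N^{(1)}=\ln N+\gamma_E+O(1/N)$, the same ratio becomes $(\alpha\ln n+\ln c_1)/(\alpha\ln n+\ln a_1)$, which tends to $1$, so the heavy-tail condition is not satisfied at $\gamma=1$. I would therefore present the lemma as proved for the open range $\gamma<1$ (consistent with the strict inequality ``less than one'' in the lemma statement), noting that the boundary $\gamma=1$ is a critical case where the slowly varying logarithmic normalization exactly absorbs any constant mass shift. The only mild technical care needed elsewhere is tracking that $c_1 n^\alpha$ need not be an integer, which is handled by the floor function and absorbed in the $o(1)$ term of the integral bound; no other obstacle appears.
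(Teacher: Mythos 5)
Your argument is essentially identical to the paper's: both bracket the partial sums $\sum_{i=1}^N i^{-\gamma}$ by $\int_1^N x^{-\gamma}dx$ and $1+\int_1^N x^{-\gamma}dx$ and conclude that the ratio tends to $(c_1/a_1)^{1-\gamma}<1$ for $\gamma<1$. Your additional observation that the case $\gamma=1$ genuinely fails is correct and worth noting, since the lemma's parenthetical ``$\gamma\leq 1$'' is inconsistent with the strict inequality used in the paper's own proof.
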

\begin{IEEEproof}
Letting $f(n) = \sum_{i=1}^{n}i^{-\gamma}$, we have that
\begin{equation*}
\sum_{i=1}^{c_{1}n^{\alpha}} p_r(i) =
\sum_{i=1}^{c_{1}n^{\alpha}} \frac{i^{-\gamma}}{\sum_{j=1}^{a_1n^\alpha} j^{-\gamma}} =
\frac{f(c_{1}n^{\alpha})}{f(a_{1}n^{\alpha})}.
\end{equation*}
Using the bounds
\begin{equation*}
\int_1^n x^{-\gamma} dx \leq f(n) \leq 1 + \int_{1}^{n} x^{-\gamma} dx, 
\end{equation*}
we have:
\begin{align*}
\lim_{n \rightarrow \infty} \frac{f(c_{1}n^{\alpha})}{f(a_{1}n^{\alpha})} \leq \lim_{n \rightarrow \infty} \frac{c_{1}^{(1-\gamma)}n^{\alpha(1-\gamma)}-\gamma}{a_{1}^{(1-\gamma)}n^{\alpha(1-\gamma)}-1},
\end{align*}where the upper bound converges to $\left ( \frac{c_{1}}{a_{1}} \right )^{(1-\gamma)}$, which is strictly less than 1 since $c_{1} < a_{1}$ and $\gamma < 1$. \end{IEEEproof}

For the above class of popularity distributions,
ignoring a small portion of requests in the tail of the distribution yields a non-vanishing outage probability.
Hence, almost all files in $\mathcal{F}$ should be cached in the network in order to achieve a non-zero $T_n$.
This is the main idea underlying the throughput upper bound in the following theorem.

\begin{theorem} \label{thm:main_result-converse}
Consider the caching wireless D2D network defined in Section \ref{sec:PF} and assume that demands distribution satisfies the condition in Definition \ref{def:distribution_class}.
Then the throughput of any scheme  must satisfy whp the scaling laws:
\begin{align}
T_{n}=\begin{cases}
0 &\mbox{for Regimes I and II},\\
O(n^{-\frac{1}{2}+\epsilon})& \mbox{for Regime III},\\
O(n^{-\frac{\alpha-\beta}{2}+\epsilon}) & \mbox{for Regime IV},\\
O(1/\log n)&\mbox{for Regime V},
\end{cases}
\end{align}
where $\epsilon > 0$ is arbitrarily small.
\end{theorem}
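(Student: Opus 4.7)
The plan is to handle two qualitatively different cases separately. For Regimes I and II, an outage argument suffices: the aggregate cache satisfies $nM \leq c\cdot n^{\alpha}$ for some constant $c<a_1$ (with $c = o(1)$ in Regime I and $c=a_2$ in Regime II), so any caching placement covers at most $nM$ distinct files out of $a_1 n^{\alpha}$. By Definition~\ref{def:distribution_class} applied with $c_1$ slightly larger than $c$, the set of uncached files carries total demand mass at least a constant $c_2>0$. Since the $n$ demands are i.i.d., the probability that at least one of them falls on an uncached file is $1-(1-c_2)^n\to 1$, giving outage and hence $T_n=0$ whp.

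For Regimes III--V, I would first establish a lower bound on the typical delivery distance. Conditioning on the high-probability event that every file requested by some node is cached, apply Definition~\ref{def:distribution_class} with $c_1=a_1/2$: the files ranked in $\{c_1 n^{\alpha}+1,\ldots,a_1 n^{\alpha}\}$ carry total demand mass at least $c_2>0$, and there are $\Theta(n^{\alpha})$ of them. The total cache budget allotted to this range is at most $nM$, so by Markov's inequality a constant fraction of these tail files have at most $k^{\star}=O(nM/m)=O(n^{1-(\alpha-\beta)})$ copies each. Using the within-range boundedness of a heavy-tailed distribution (for Zipf with $\gamma<1$, probabilities within a constant-factor rank interval agree up to a constant), a constant fraction of the total demand mass falls on this ``sparsely cached'' subset. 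For any such request, the at most $k^{\star}$ copies in the unit square leave a uniformly random requester whp at distance $\Omega(1/\sqrt{k^{\star}})=\Omega(n^{-(1-(\alpha-\beta))/2})$ from the nearest copy, so the average delivery distance satisfies $\bar d=\Omega(n^{-(1-(\alpha-\beta))/2})$ on a constant fraction of demands.

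The remainder is the standard protocol-model counting bound. The $(1+\Delta)r$ exclusion constraint forces simultaneously active receivers to be an $\Omega(r)$-separated packing of the unit square, so $|\Ac_t|=O(1/r^2)$ and the aggregate bit-meters per unit time is $O(W/r)$. Equating this to the traffic demand $n T_n \bar d$ gives $T_n=O(1/(n\bar d\, r))$, and the whp-connectivity requirement $r=\Omega(\sqrt{\log n/n})$ yields $T_n=O(1/(\bar d\sqrt{n\log n}))$. Plugging in the distance bound recovers $T_n=O(n^{-(\alpha-\beta)/2+\epsilon})$ for Regime IV and $T_n=O(n^{-1/2+\epsilon})$ for Regime III. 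For Regime V, delivery occurs in a single hop from one of $\Theta(n)$ copies and the bound reduces to $T_n\leq |\Ac_t|\cdot W/n=O(1/\log n)$ at the connectivity radius, as claimed.

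The hardest step is the tail/pigeonhole argument in the second paragraph: an adversary can concentrate cache copies on the more popular members within any tail rank interval, so Definition~\ref{def:distribution_class} alone does not immediately suffice, and one must additionally invoke that ranks within a constant-factor interval carry mutually comparable demand probabilities. This property does hold for Zipf with $\gamma<1$ (the case covered by the lemma above) but would require strengthening the heavy-tail definition to go through in full generality. A secondary technical point is to make the Markov, area, and Chernoff statements all hold whp simultaneously over the joint randomness of node positions, cache placement, and demand vector, which I would address by a union bound over the $O(1)$ high-probability events invoked in the argument.
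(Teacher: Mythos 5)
Your overall architecture matches the paper's: an outage argument for Regimes I and II, a delivery-distance lower bound combined with a protocol-model exclusion-area (bit-meters) count and the connectivity radius $r=\Omega(\sqrt{\log n/n})$ for Regimes III and IV, and a pure link-counting bound at the connectivity radius for Regime V. The Regime I/II paragraph and the third paragraph are essentially the paper's Lemma \ref{lemma:outage1} and Section \ref{subsec:upper_bound2}. However, the step you yourself flag as the hardest one is a genuine gap, and it is not merely a technicality: your pigeonhole over the tail rank interval $\{c_1 n^{\alpha}+1,\dots,a_1 n^{\alpha}\}$ needs the demand mass within that interval to be spread comparably across ranks, which Definition \ref{def:distribution_class} does not guarantee; an adversarial placement can lavish all $nM$ copies on whichever tail files happen to carry the mass. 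Since Theorem \ref{thm:main_result-converse} is stated for the entire class of Definition \ref{def:distribution_class}, your argument as written proves a strictly weaker statement. The paper avoids this entirely with a \emph{local} counting argument (Lemma \ref{lemma:outage2}): fix a node and a disk of radius $\rho = n^{-\frac{1-(\alpha-\beta)}{2}-\epsilon'}$ around it; whp the disk contains at most $(1+\delta)n\rho^2$ nodes and hence at most $(1+\delta)n\rho^2 M = a_2(1+\delta)n^{\alpha-\epsilon'} < a_1 n^{\alpha}$ distinct cached files. By monotonicity of $p_r$ and Definition \ref{def:distribution_class}, \emph{any} such set of files covers demand mass at most $1-c_2$, so the node's (independent) request misses the disk with probability at least $c_2$, regardless of which files the adversary placed there. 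A Chernoff bound over the $n$ independent demands then gives a constant fraction of SD pairs at distance $\Omega(\rho)$ whp. This eliminates both the comparability assumption and your secondary worry about copies being positioned adversarially relative to the requester, since no per-file geometry is ever invoked.

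Two smaller points. For Regime V your sentence ``delivery occurs in a single hop from one of $\Theta(n)$ copies'' skips the step that actually forces communication: one must show that a constant fraction of nodes cannot serve themselves from their own cache (the paper's Lemma \ref{lemma:outage3}, again a direct application of Definition \ref{def:distribution_class} since $M = a_2 n^{\alpha} < a_1 n^{\alpha} = m$), after which the connectivity requirement and the bound $T_n \le |\Ac_t| W/n = O(1/\log n)$ follow as you say; no single-hop restriction is needed or used. For Regimes I and II, your ``at least one demand uncached'' event suffices for $T_n=0$ under the paper's outage convention, though the paper proves the stronger statement that a constant fraction of nodes are in outage via the same Chernoff bound it reuses in the other regimes.
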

\begin{proof}
The proof is given in Section \ref{subsec:upper_bound1} for Regimes I and II, Section \ref{subsec:upper_bound2} for Regimes III and IV and Section \ref{subsec:upper_regime5} for Regime V.
\end{proof}

For all five regimes, the multiplicative gap between the achievable $T_n$ in Theorem \ref{thm:main_result} and its upper bound in Theorem \ref{thm:main_result-converse} is within $n^{\epsilon}$ for any arbitrarily small $\epsilon>0$.
Therefore, the throughput scaling law depicted in Fig. \ref{fig:result} (solid curve) is order-optimal for the class of heavy-tailed popularity distributions in Definition \ref{def:distribution_class}.
As we will explain in Section \ref{sec:scheme}, in the parameter regimes of interest, such order-optimal throughput scaling is achievable by fully \emph{decentralized random caching uniformly across $\mathcal{F}$}. Similarly, from the following corollary, the throughput scaling law depicted in Fig. \ref{fig:result} (dashed curve) is order-optimal for the class of heavy-tailed popularity distributions in Definition \ref{def:distribution_class} when the file delivery is restricted to single-hop transmission.

\begin{cor} \label{co:singlehop_upper}
Consider the caching wireless D2D network defined in Section \ref{sec:PF} and assume that demands distribution satisfies the condition in Definition \ref{def:distribution_class}.
If the file delivery is restricted to single-hop transmission, then the throughput of any scheme  must satisfy whp the scaling laws:
\begin{align}
T_{n}=\begin{cases}
0 &\mbox{for Regimes I and II},\\
O(n^{-1+\epsilon})& \mbox{for Regime III},\\
O(n^{-(\alpha-\beta)+\epsilon}) & \mbox{for Regime IV},\\
O(1/\log n)&\mbox{for Regime V},
\end{cases}
\end{align}
where $\epsilon > 0$ is arbitrarily small.
\end{cor}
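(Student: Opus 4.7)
The plan is to partition the five regimes into two groups. Regimes I, II, and V are inherited for free from Theorem \ref{thm:main_result-converse}, since single-hop delivery is a strict special case of multihop delivery and every upper bound proved for the multihop setting remains valid under the further restriction. Concretely, the outage conclusion $T_n = 0$ for Regimes I and II depends only on the fact that strictly fewer than $m$ distinct files can be stored in the network when $nM < m$, and the Regime V bound $O(1/\log n)$ arises from a nearest-neighbor connectivity constraint that is unaffected by the single-hop restriction. The substantive content of the corollary therefore lies in proving the sharper bounds $O(n^{-1+\epsilon})$ in Regime III and $O(n^{-(\alpha-\beta)+\epsilon})$ in Regime IV, where single-hop is strictly more constrained than multihop.

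For Regimes III and IV, I would combine a coverage argument with a protocol-model packing inequality. Fix any caching placement and set $r_\star = c\sqrt{m/(Mn)}$ for a sufficiently small constant $c > 0$. Standard Chernoff concentration for uniformly placed points shows that every disk of radius $r_\star$ contains at most $O(n r_\star^2 \log n)$ nodes whp, hence at most $K_\star = O(M n r_\star^2 \log n)$ distinct cached files. By shrinking $c$, I can arrange $K_\star \leq c_1 n^{\alpha}$ for any prescribed $c_1 < a_1$, absorbing the logarithmic slack into the eventual $n^{\epsilon}$ factor. Under the heavy-tail hypothesis of Definition \ref{def:distribution_class}, even the adversarial caching that stocks the $K_\star$ globally most popular files inside every disk captures request probability at most $1 - c_2$, so node $u$'s request $W_{f_u}$ falls outside the set of files cached in that disk with probability at least $c_2 > 0$. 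A second Chernoff bound over the $n$ independent requests then yields that whp at least a constant fraction of nodes must be served through a single-hop transmission of radius strictly greater than $r_\star$.

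The throughput bound then follows from the protocol-model packing inequality. If $k$ single-hop transmissions with radii $r_1,\ldots,r_k$ are simultaneously active, the exclusion disks of radius $\Delta r_i / 2$ around the receivers are pairwise disjoint and lie essentially in the unit square, so $\sum_{i=1}^{k} r_i^2 \leq C$ for a constant $C = C(\Delta)$. Integrating this area--time budget over the whole delivery cycle of length $T$ gives $CT \geq \sum_{u \in \Uc} r_u^2$, where $r_u$ denotes the radius of the single-hop transmission serving node $u$. The previous paragraph implies $\sum_{u \in \Uc} r_u^2 = \Omega(n r_\star^2) = \Omega(m/M)$ whp, hence $T = \Omega(m/M)$ and, since per-node throughput is the reciprocal of $T$ up to constants, $T_n = O(M/m)$ modulo polylogarithmic slack. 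Substituting $M/m = \Theta(n^{-(\alpha-\beta)})$ yields $O(n^{-(\alpha-\beta)+\epsilon})$ in Regime IV and $O(n^{-1+\epsilon})$ in Regime III, completing the corollary.

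The step I expect to be the main obstacle is the area--time packing inequality when the radii $r_u$ vary across nodes: converting the pairwise receiver-exclusion constraint into a clean additive budget over the whole delivery cycle requires dealing with boundary effects of the unit square and the inhomogeneity of the random node placement, and this is exactly where the extraneous logarithmic factors appear. A secondary subtlety is the self-cache event in which $W_{f_u}$ already sits in $u$'s own cache so that no transmission is required; I would dispose of it by enlarging $K_\star$ to include the $M$ files already stored at $u$, so that the coverage argument still leaves a positive constant fraction of nodes requiring genuinely remote access.
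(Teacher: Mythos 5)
Your proposal is correct and follows essentially the same route as the paper: Regimes I, II, and V are inherited from the multihop converse, and for Regimes III and IV you show via the heavy-tail property and a Chernoff bound that a constant fraction of nodes must be served over distance $\Omega\bigl(n^{-\frac{1-(\alpha-\beta)}{2}-\epsilon}\bigr)$, then apply the protocol-model exclusion-area packing to get $T_n = O(n^{-(\alpha-\beta)+\epsilon})$, which is exactly the paper's Lemma~\ref{lemma:outage2} combined with the argument of Section~\ref{subsec:singlehop_upper}. The only cosmetic differences are that you phrase the packing step as an area--time budget over the delivery cycle rather than a per-slot bound on simultaneously active long links, and that your ``shrink the constant $c$'' step must really be the slight radius reduction by $n^{-\epsilon'}$ that you already anticipate when absorbing the logarithmic slack.
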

\begin{proof}
The proof is given in Section \ref{subsec:singlehop_upper}.
\end{proof}

\begin{figure}
\begin{center}
\includegraphics[scale=1]{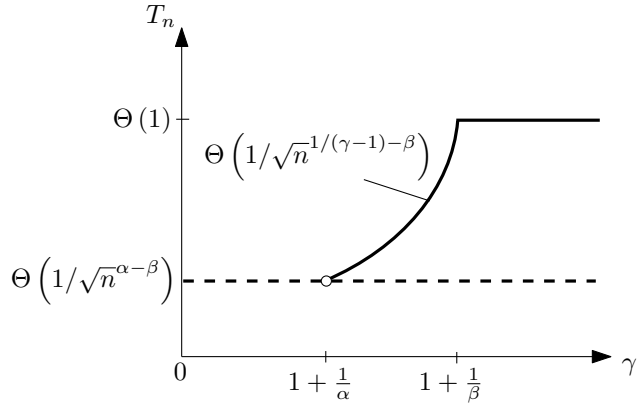}
\end{center}
\caption{Achievable throughput scaling laws in \eqref{eq:improved_throughput} (solid curve) and  \eqref{eq:achievable_multihop} (dashed curve) with respect to $\gamma$  for Regime IV.}
\label{fig:imporved_scaling}
\end{figure}

As the deviation between the request probabilities in the popularity distribution increases (e.g., $\gamma$ increases in a Zipf distribution), 
the condition in Definition \ref{def:distribution_class} may not be satisfied. 
In this case, it can be expected that the throughput scaling law may be improved by a more refined caching strategy, biased towards the files requested with
higher probability. In particular, we consider caching only an appropriately optimized subset of most popular files, while 
guaranteeing that the aggregate ``tail'' probability of the least popular files vanishes, such that we still
get no outage whp. In the following, we demonstrate the above statement for a Zipf popularity distribution 
with $\gamma > 1+\frac{1}{\alpha}$.

\begin{theorem} \label{thm:improved_throughput}
Consider the caching wireless D2D network defined in Section \ref{sec:PF} and assume that the demands follow
a Zipf popularity distribution with exponent $\gamma>1+\frac{1}{\alpha}$. Then the achievable throughput satisfies whp the scaling law:
\begin{align} \label{eq:improved_throughput}
T_{n}= \Omega\left(n^{-\frac{1-\min(1,\beta+1-1/(\gamma-1))}{2}-\epsilon}\right){~~~} \mbox{for Regime IV},
\end{align}
where $\epsilon > 0$ is arbitrarily small.
\end{theorem}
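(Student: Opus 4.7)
The plan is to reduce the problem to Theorem~\ref{thm:main_result} by caching only a carefully truncated subset of the library, leveraging the fact that under a heavy-tailed Zipf distribution with $\gamma > 1 + 1/\alpha$, all $n$ requests will fall inside a set of top $m' \ll m$ files with high probability. Fix an arbitrary small $\delta > 0$ and set $m' = \lceil n^{1/(\gamma-1) + \delta} \rceil$. The hypothesis $\gamma > 1 + 1/\alpha$ guarantees $1/(\gamma-1) < \alpha$, so $m' < m$ for all sufficiently large $n$ and the truncation is meaningful. The modified strategy is a decentralized random caching \emph{uniformly over $\mathcal{F}' = \{W_1, \ldots, W_{m'}\}$}, with the local multihop delivery scheme of Section~\ref{sec:scheme} applied to this effective library of size $m'$.

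The first step is to bound the probability that any user requests a file outside $\mathcal{F}'$. Since $\gamma > 1$, the normalization $\sum_{j=1}^m j^{-\gamma}$ is bounded below by a positive constant, so
\[
\sum_{i=m'+1}^{m} p_r(i) \leq C \int_{m'}^\infty x^{-\gamma}\, dx = O\bigl((m')^{1-\gamma}\bigr) = O\bigl(n^{-1-\delta(\gamma-1)}\bigr).
\]
A union bound over the $n$ independent demands yields $\PP(\exists u \in \Uc : f_u > m') = O(n^{-\delta(\gamma-1)}) \to 0$. Combining this with the $o(1)$ outage probability from Theorem~\ref{thm:main_result} applied to the reduced problem, the total outage probability vanishes.

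The second step is to invoke Theorem~\ref{thm:main_result} with effective parameters $\alpha' = 1/(\gamma-1) + \delta$ and the unchanged $\beta$. Since the original setup is in Regime IV with $\alpha - \beta < 1$ and $\alpha' < \alpha$, we automatically have $\alpha' - \beta < 1$. Two subcases arise: (i) if $\beta < 1/(\gamma-1)$, then $\alpha' - \beta \in (0,1)$ and the reduced problem sits in Regime IV, yielding $T_n = \Omega\bigl(n^{-(\alpha'-\beta)/2 - \epsilon}\bigr) = \Omega\bigl(n^{-(1/(\gamma-1)-\beta)/2 - \delta/2 - \epsilon}\bigr)$; (ii) if $\beta \geq 1/(\gamma-1)$, then for small enough $\delta$ the reduced problem lies in Regime V (or is trivially solved because $M \geq m'$), giving $T_n = \Omega(n^{-\epsilon})$. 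Setting $\mu = \min(1, \beta + 1 - 1/(\gamma-1))$ and absorbing $\delta$ into the slack $\epsilon$, both subcases combine to the claimed bound $T_n = \Omega\bigl(n^{-(1-\mu)/2 - \epsilon}\bigr)$.

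The main obstacle is the balancing act in selecting $m'$: it must be large enough that the $n$-fold union bound on the Zipf tail still vanishes (forcing $m' \gg n^{1/(\gamma-1)}$), yet small enough that the effective library yields a strict throughput gain over Theorem~\ref{thm:main_result}. The hypothesis $\gamma > 1 + 1/\alpha$ is precisely the condition $n^{1/(\gamma-1)} \ll n^\alpha$ that opens a non-empty window for such a choice, making truncation simultaneously feasible and beneficial. A subsidiary point to check is that the local multihop protocol of Section~\ref{sec:scheme}, although designed around uniform caching, remains valid here: its analysis in Theorem~\ref{thm:main_result} is universal over popularity distributions, and the (whp) support of the demand coincides with the subset on which caching is uniform.
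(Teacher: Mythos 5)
Your proposal is correct and follows essentially the same route as the paper: truncate the library to the top $\Theta\bigl(n^{1/(\gamma-1)+\delta}\bigr)$ files (the paper's $\mathcal{F}_{\sf sub}$ of size $Mn_2$ with $n_2=n^{1-\min(1,\beta+1-1/(\gamma-1))+\epsilon_c/2}$ is exactly this in the nontrivial case), cache uniformly at random over that subset, union-bound the probability that some user requests a file in the discarded tail using $\gamma>1+1/\alpha$, and run the local multihop protocol on the effective library. The only difference is presentational: you invoke Theorem~\ref{thm:main_result} as a black box with effective exponent $\alpha'=1/(\gamma-1)+\delta$, while the paper re-derives the outage, source-count, and data-path lemmas for the truncated setting; this is legitimate since the achievability analysis of Theorem~\ref{thm:main_result} is demand-distribution-agnostic and the conditioned demands remain i.i.d.\ on $\mathcal{F}'$.
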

\begin{proof}
The proof is given in Section \ref{sec:improved_throughput}.
\end{proof}

In Fig.~\ref{fig:imporved_scaling}, we compare the improved scaling laws in  \eqref{eq:improved_throughput} and the scaling laws in \eqref{eq:achievable_multihop} for Regime IV, where the term $n^{-\epsilon}$ is omitted. When the demands follow a Zipf popularity distribution, the improved throughput scaling $\Theta(1/\sqrt{n}^{1/(\gamma-1)-\beta})$ is achievable instead of $\Theta(1/\sqrt{n}^{\alpha-\beta})$ in \eqref{eq:achievable_multihop} if $\gamma>1+\frac{1}{\alpha}$ and eventually $\Theta(1)$ scaling is achievable when $\gamma\geq 1+\frac{1}{\beta}$ (see Fig.~\ref{fig:imporved_scaling}).
As we will explain in Section \ref{sec:improved_throughput}, a fully decentralized random caching still achieves the improved throughput scaling laws in Theorem \ref{thm:improved_throughput}, by appropriately reducing the effective library size, i.e., \emph{decentralized random caching uniformly across a subset of popular files}. Namely, in this regime, we can rule out some files from the library, each of which probability is small enough such that an outage does not occur with probability approaching one as $n \rightarrow \infty$.

\textbf{Comparison with the results in \cite{Gitzenisr:13}:} In order to compare our results with these summarized in Table III of \cite{Gitzenisr:13}, we need to let $\alpha \leq 1$ ($n = \Omega(m)$) and $M$ be a constant or $\beta=0$ ($M = \Theta(1)$), then by ignoring the $\epsilon$ in the scaling law exponent, we obtain that $T_n =\Omega\left(\sqrt{\frac{M}{n^{\frac{1}{\gamma-1}}}}\right)=\Omega\left(n^{-\frac{1/(\gamma-1)}{2}}\right) $ under the condition $\gamma > 1 + \frac{1}{\alpha}$ from Theorem \ref{thm:improved_throughput}, which can be either better or worse than the results in \cite{Gitzenisr:13}. For example, if we let $\alpha=1$ and $nM-m = \Theta(1)$, then the throughput in \cite{Gitzenisr:13} is $\Omega\left(\frac{1}{\sqrt{n}}\right)$, which is smaller than $\Omega\left(n^{-\frac{1/(\gamma-1)}{2}}\right)$ for $\gamma > 2$, which is feasible since $\alpha=1$. Remarkably, in this regime, a simple decentralized strategy
consisting of caching the files at random with a uniform distribution over the most popular files,
while discarding the tail of the distribution, can achieve a better throughput than the centralized caching scheme of  \cite{Gitzenisr:13}.
On the other hand,  for $\alpha < 1$, the throughput in \cite{Gitzenisr:13} behaves as $\Omega(1)$,
which is better than $\Omega\left(n^{-\frac{1/(\gamma-1)}{2}}\right)$. In this case, the decentralized random caching strategy might not be sufficient to achieve order optimality under Definition \ref{def:throughput}, i.e., the symmetric rate under no outage.
Whether it is possible to achieve order-optimal throughput scaling with decentralized random caching, allowing for more general
caching distributions (not just uniform over a subset of most probable files) is an interesting question which is left for future research.


\section{Universally Achievable Throughput} \label{sec:scheme}

In this section, we prove Theorem \ref{thm:main_result}.
In particular, we present file placement policies and transmission protocols for Regimes III, IV, and V and analyze their achievable throughput scaling laws.

\subsection{Regimes IV and V} \label{subsec:scheme1}
In this subsection, we prove that
\begin{align} \label{eq:throughput1}
T_n=n^{-\frac{\alpha-\beta}{2}-\epsilon}
\end{align}
is achievable whp for Regimes IV and V, where $\epsilon>0$ is arbitrarily small.

\begin{figure}
\begin{center}
\includegraphics[scale=1]{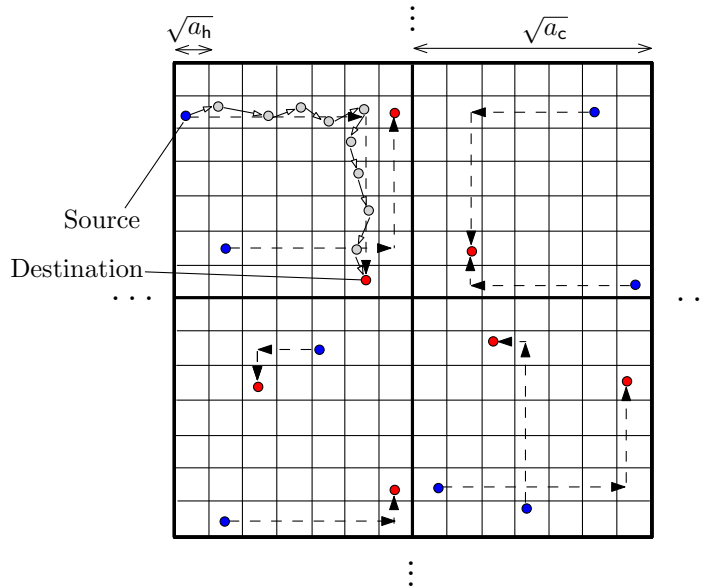}
\end{center}
\caption{The proposed multihop routing protocol for file delivery after the source node selection.}
\label{fig:scheme}
\end{figure}


\subsubsection{File placement policy and transmission protocol} \label{subsubsec:scheme1}
In these regimes, a {\em decentralized} file placement and a {\em local} multihop protocol are proposed as follows.
\\\\
{\bf Decentralized file placement}: Each node $u$ stores $M$ distinct files in its cache,
chosen uniformly at random from the library $\mathcal{F}$, independently of other nodes.
\\\\
{\bf Local multihop protocol}:
We first explain how each node finds its source node having the requested file ({\bf source node selection}):
\begin{itemize}
 \item Divide the entire network into square {\em traffic cells} of area $a_{\sf c}=n^{-\eta}$ for some $\eta\in [0,1)$, where $\eta$ will be determined later on.
 \item Each node chooses one of the nodes having the requested file in the same traffic cell as its {\em source} node. If there are multiple candidates, choose one of them uniformly at random.
\end{itemize}

From Definition~\ref{def:throughput} and the above source node selection, all nodes
should find their source nodes within their own traffic cells whp, in order to achieve a non-zero $T_{n}$. Lemma~\ref{lem:outage} below characterizes such a condition of the area of traffic cell $a_{\sf c}$ (i.e., $\eta$) such as $\eta\in[0,1-(\alpha-\beta))$.

For the ease of exposition, we refer to the pair formed by a node and
its source node as {\em source--destination (SD) pair}. Notice that in our model, each SD pair is located in the same traffic cell while in the conventional wireless ad-hoc network, SD pairs are randomly located over the entire network. Thanks to caching, we can reduce the distance of each SD pair (see Lemma~\ref{lem:outage}). Also, differently from the conventional ad-hoc network model, each node can be a source node of multiple destinations, which make the throughput analysis more complicated (see Lemma~\ref{lemma:num_sources}).

Next, we explain the proposed multihop transmission scheme for the file delivery between $n$ SD pairs, see also Fig.~\ref{fig:scheme}  ({\bf multihop transmission}):
\begin{itemize}
\item Divide each traffic cell into square {\em hopping cells} of area $a_{\sf h}=\frac{2\log n}{n}$.
\item Define the horizontal data path (HDP) and the vertical data
path (VDP) of a SD pair as the horizontal line and the
vertical line connecting a source node to its destination node, respectively.
Each source node transmits the requested file to its destination by first
hopping to the adjacent hopping cells on its HDP and then on its
VDP.\footnote{If a source node and its destination node are in the same hopping cell, then the source node directly transmits to its destination.}
\item Time division multiple access (TDMA) scheme is used with reuse factor $J$
for which each hopping cell is activated only once out of $J$ time slots.
\item A transmitter node in each active hopping cell sends a file (or fragment of a file) to a receiver node in an adjacent hopping cell. Round-robin is used for all transmitter nodes in the same hopping cell.
\end{itemize}

In this scheme, each hopping cell should contain at least one node for relaying as in \cite{GuptaKumar:00,GamalMammenPrabhakarShah:06}, which is satisfied whp since $a_{\sf h}=\frac{2\log n}{n}$ (see Lemma~\ref{lemma:node_dis}~(a)).
\begin{lemma} \label{lemma:node_dis}
The following properties hold whp:
\begin{description}
\item[(a)] Partition the network region $[0,1]\times [0,1]$ into cells of area $\frac{2\log n}{n}$.
Then the number of nodes in each cell is between $1$ and $4\log n$.
\item[(b)] Partition the network region $[0,1]\times [0,1]$ into cells of area $n^{-a}$, where $a\in[0,1)$.
For any $\delta>0$, the number of nodes in each cell is between $(1-\delta)n^{1-a}$ and $(1+\delta)n^{1-a}$. 
\end{description}
\end{lemma}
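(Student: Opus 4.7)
The plan is to treat the number of nodes falling in each cell as a binomial random variable (since the $n$ nodes are placed i.i.d.\ uniformly on the unit square) and then combine standard Chernoff concentration with a union bound over all cells. Both parts follow the same template; the difference is only in how tight the concentration needs to be relative to the number of cells.

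For part (a), I would partition into $N_c \approx n/(2\log n)$ cells, so the count $N_i$ in the $i$th cell is $\mathrm{Binomial}(n,\, 2\log n/n)$ with mean $\mu = 2\log n$. For the lower bound,
\[
\PP(N_i = 0) = \bigl(1 - 2\log n/n\bigr)^n \le e^{-2\log n} = n^{-2},
\]
so the union bound over the $\Theta(n/\log n)$ cells gives $\PP(\exists i:\, N_i = 0) = O\!\bigl(1/(n\log n)\bigr) \to 0$. For the upper bound, I would apply the multiplicative Chernoff inequality $\PP\bigl(N_i \ge (1+\delta)\mu\bigr) \le \exp(-\mu\, h(\delta))$ with $h(\delta) = (1+\delta)\ln(1+\delta) - \delta$, and pick $\delta$ so that $\mu\, h(\delta) > (1+\epsilon)\log(n/\log n)$. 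This fixes the numerical upper constant (enlarged from the stated $4$ if needed, which does not affect the $\Theta(\log n)$ order) and drives the union bound to zero.

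For part (b), the number of cells is $n^a$ and the per-cell mean is $\mu = n^{1-a}$, which grows polynomially in $n$. The standard two-sided Chernoff bound
\[
\PP\bigl(|N_i - \mu| > \delta \mu\bigr) \le 2\exp\!\bigl(-\delta^2 \mu/3\bigr)
\]
therefore decays as $\exp\!\bigl(-\Theta(n^{1-a})\bigr)$, which is super-polynomially small. A union bound over the $n^a$ cells still yields $n^a \exp\!\bigl(-\Theta(n^{1-a})\bigr) \to 0$ for any fixed $\delta > 0$ and any $a \in [0,1)$, establishing the claim.

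The only bookkeeping subtlety lies in part (a): the per-cell upper-tail probability must be pushed below $\log n / n$ in order to beat the $n/\log n$ union-bound factor, which requires invoking the sharp form $h(\delta)$ of the Chernoff exponent (or, equivalently, slightly enlarging the constant in front of $\log n$). There is no structural obstacle beyond this standard concentration computation; part (b) is essentially automatic because the polynomial growth of $\mu$ makes Chernoff vastly sharper than the union-bound penalty.
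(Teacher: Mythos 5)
Your proof is correct in substance, but it is worth noting that the paper does not actually prove this lemma at all: it simply cites \cite[Lemma 1]{GamalMammenPrabhakarShah:06} for part (a) and \cite[Lemma 4.1]{ozgur2007hierarchical} for part (b). What you supply is therefore a self-contained replacement for those citations, and the route you take (binomial cell counts, Chernoff concentration, union bound over cells) is exactly the one used in those references, so there is no methodological divergence --- only the difference between citing and proving. Your treatment of part (b) is airtight: with $\mu = n^{1-a}$ the two-sided bound $2\exp(-\delta^2\mu/3)$ is super-polynomially small and crushes the union bound over $n^a$ cells for any fixed $\delta>0$ and $a\in[0,1)$. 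The one place where you show more care than the lemma statement deserves is the upper tail in part (a): as you observe, with $\mu = 2\log n$ and the target $4\log n$ (i.e., $\delta=1$) the sharp Chernoff exponent gives $\mu\,h(1) = 2(2\ln 2 - 1)\log n \approx 0.77\log n$, which does \emph{not} beat the $\Theta(n/\log n)$ union-bound factor if $\log$ denotes the natural logarithm --- one needs either $\log=\log_2$ (under which $\mu\,h(1)\approx 1.11\ln n$ and the bound closes) or a slightly larger numerical constant such as $4.4\log n$. Your explicit acknowledgment that the constant may need enlarging, and that this is immaterial because every downstream use of the lemma (e.g., the data-path count in Lemma \ref{lemma:num_datapaths}) only requires the $\Theta(\log n)$ order, is exactly the right resolution; it is in fact a sharper reading of the statement than the paper itself offers.
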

\begin{proof}
The proofs of first and second properties are given in \cite[Lemma 1]{GamalMammenPrabhakarShah:06} and \cite[Lemma 4.1]{ozgur2007hierarchical}, respectively.
\end{proof}

\begin{lemma} \label{lem:outage}
Suppose Regimes IV and V.  If $\eta\in[0,1-(\alpha-\beta))$, then all nodes are able to find their source nodes within their traffic cells whp.
\end{lemma}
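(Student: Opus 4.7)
The plan is to establish that whp every node $u$ has at least one node in its own traffic cell $\Bc_u$ (possibly $u$ itself) whose cache contains the requested file $W_{f_u}$. Equivalently, the bad event $E_u^c$ in which no such candidate exists must occur for no $u\in\Uc$ whp. By a union bound, it suffices to make $n\cdot \max_u \PP(E_u^c)$ tend to zero.

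For a fixed node placement $\Psf$ and demand vector $\fsf$, the decentralized caching makes the cache contents $\{Z_v\}_{v\in\Uc}$ mutually independent, each $Z_v$ being a uniformly random size-$M$ subset of $\Fc$. Hence $\PP(W_{f_u}\notin Z_v\mid \Psf,\fsf) = 1-M/m$ independently across $v$, so
\[
\PP(E_u^c\mid \Psf,\fsf)\;=\;(1-M/m)^{N_u},
\]
where $N_u := |\{v\in\Uc : v\in\Bc_u\}|$. Note the right-hand side does not depend on $f_u$, since caches are distributed uniformly over $\Fc$; averaging over $\fsf$ gives the same expression.

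To control $N_u$, I would apply Lemma \ref{lemma:node_dis}(b) with $a=\eta\in[0,1)$ (valid in both Regimes IV and V): whp every traffic cell contains at least $(1-\delta)n^{1-\eta}$ nodes, for any fixed $\delta>0$. On this high-probability event, using $(1-x)^k \leq e^{-kx}$ together with $M/m=(a_2/a_1)\,n^{-(\alpha-\beta)}$,
\[
\PP(E_u^c\mid \Psf)\;\leq\;\exp\!\bigl(-c\,n^{1-\eta-(\alpha-\beta)}\bigr),\qquad c:=\tfrac{(1-\delta)a_2}{a_1}>0.
\]
The hypothesis $\eta<1-(\alpha-\beta)$ makes the exponent strictly positive, so the per-node failure bound decays super-polynomially in $n$ and easily absorbs the $n$ factor of the final union bound. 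Adding the $o(1)$ probability that the node-count event from Lemma \ref{lemma:node_dis}(b) fails completes the argument.

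The only delicate point — which I expect to be routine — is to cleanly separate the three independent sources of randomness $(\Psf,\Gsf,\fsf)$ when writing the chain of conditional probabilities, so that the product form $(1-M/m)^{N_u}$ is rigorously justified after conditioning on $\Psf$. Once that is done, Regime V ($\alpha=\beta$, corresponding to $\eta<1$) and Regime IV ($\alpha-\beta\in(0,1)$, corresponding to $\eta<1-(\alpha-\beta)$) are handled uniformly by exactly the same exponential concentration bound.
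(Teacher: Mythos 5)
Your proposal is correct and follows essentially the same route as the paper's proof: a union bound over nodes, the per-node failure probability $(1-M/m)^{N_u}$ from independent uniform caching, and Lemma \ref{lemma:node_dis}(b) to lower-bound the traffic-cell population by $(1-\delta)n^{1-\eta}$, with the condition $\eta<1-(\alpha-\beta)$ making the exponent positive. The only cosmetic difference is that you invoke $(1-x)^k\leq e^{-kx}$ to treat Regimes IV and V uniformly, whereas the paper evaluates the limit $(1-1/t)^t\to e^{-1}$ and handles the two regimes separately; both yield the same super-polynomially decaying bound.
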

\begin{proof} Let $A_i$ denote the event that node $i$ establishes its source node within its traffic cell, where $i\in[1:n]$. Then, we have:
\begin{align} \label{eq:outage_prob}
\mathbb{P}\left(\cap_{i\in[1:n]}A_i \right)&=1-\mathbb{P}\left(\cup_{i\in[1:n]}A^c_i \right)\nonumber\\
&\geq 1-\sum_{i\in[1:n]}\mathbb{P}\left(A^c_i \right)\nonumber\\
&\overset{\text{whp}}{\geq}1-n \left(\frac{m-M}{m}\right)^{(1-\delta)na_{\sf c}},
\end{align}
where the first inequality follows from the union bound and the second inequality is due to the fact that the number of nodes in each traffic cell is lower bounded by $(1-\delta)na_{\sf c}$ whp (see Lemma~\ref{lemma:node_dis}~(b)) and hence, $\mathbb{P}(A^c_i)\overset{\text{whp}}{\leq} \left(\frac{m-M}{m}\right)^{(1-\delta)na_{\sf c}}$.

Thus, for Regime IV,
\begin{align}
\mathbb{P}\left(\cap_{i\in[1:n]}A_i \right)\overset{\text{whp}}{\geq}1-n\left(\left(1-\frac{a_2}{a_1}\frac{1}{n^{\alpha-\beta}}\right)^{\frac{a_1}{a_2}n^{\alpha-\beta}}\right)^{\frac{a_2(1-\delta)}{a_1}n^{1-\eta-\alpha+\beta}}
\end{align}
and from the fact that
\begin{align} \label{eq:converge_e}
\lim_{n\to\infty}\left(1-\frac{a_2}{a_1}\frac{1}{n^{\alpha-\beta}}\right)^{\frac{a_1}{a_2}n^{\alpha-\beta}}=\frac{1}{e},
\end{align}
$\mathbb{P}\left(\cap_{i\in[1:n]}A_i \right)\to 1$ as $n\to\infty$, since $\eta<1-\alpha+\beta$ is assumed in this lemma.
Similarly, for Regime V,
\begin{align}
\mathbb{P}\left(\cap_{i\in[1:n]}A_i \right)\overset{\text{whp}}{\geq}1-n\left(\frac{a_1-a_2}{a_1}\right)^{(1-\delta)n^{1-\eta}},
\end{align}
which again converges to one as $n\to\infty$, since $a_1>a_2$ for this regime and $\eta<1-\alpha+\beta$ is assumed in this lemma.
In conclusion, all nodes are able to find their source nodes within their traffic cells whp under the condition where $\eta\in[0,1-(\alpha-\beta))$.
\end{proof}

\begin{figure}
\begin{center}
\includegraphics[scale=0.9]{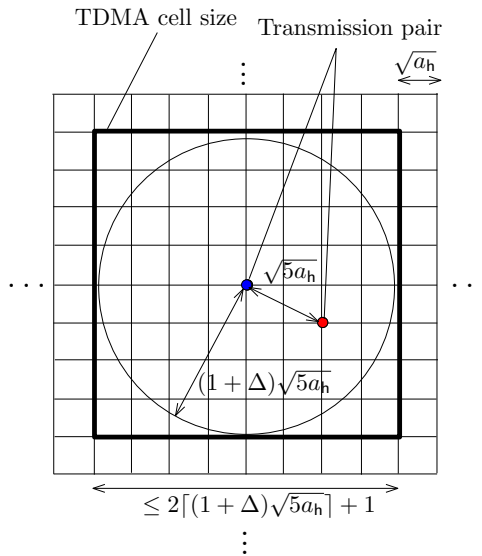}
\end{center}
\caption{TDMA cell size from the protocol model.}
\label{fig:tdma}
\end{figure}

\subsubsection{Achievable throughput} \label{subsubsec:throughput1}
We now show that the proposed scheme in Section~\ref{subsubsec:scheme1} achieves  \eqref{eq:throughput1} whp for Regimes IV and V.
From Lemma~\ref{lem:outage}, we assume $\eta\in[0,1-(\alpha-\beta))$ to achieve a non-zero $T_{n}$ by the proposed scheme in Section~\ref{subsubsec:scheme1}. The following lemmas are instrumental to proof.
\begin{lemma} \label{lemma:aggregate rate}
Suppose Regimes IV and V and assume that $\eta\in[0,1-(\alpha-\beta))$. Let $R_n$ denote the aggregate rate achievable for any hopping cell. If $J\geq \left(2\lceil(1+\Delta)\sqrt{5} \rceil+1\right)^2$, then $R_n=\frac{W}{J}$ is achievable.
\end{lemma}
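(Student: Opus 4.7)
The plan is to verify that the proposed TDMA schedule with reuse factor $J$ is compatible with the protocol model, so that in every active slot each active hopping cell supports one successful transmission at rate $W$, and hence on average the cell transports $W/J$ bits/s/Hz.

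First, I would fix the transmission radius $r$. Since in the local multihop protocol every transmission goes from a node in a hopping cell $C$ to a node in an adjacent hopping cell $C'$, the worst-case TX--RX distance is realized when the TX and RX sit at the two farthest corners of the $\sqrt{a_{\sf h}} \times 2\sqrt{a_{\sf h}}$ rectangle formed by $C\cup C'$, giving distance $\sqrt{(2\sqrt{a_{\sf h}})^2+(\sqrt{a_{\sf h}})^2} = \sqrt{5\,a_{\sf h}}$. Hence setting $r=\sqrt{5\,a_{\sf h}}$ ensures condition~(i) of the protocol model for every hop.

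Second, I would describe the TDMA pattern. With $J=K^2$ and $K=2\lceil(1+\Delta)\sqrt{5}\,\rceil+1$, I partition the array of hopping cells into $K\times K$ blocks and, cycling through the $J=K^2$ positions within a block, activate in each slot only those hopping cells whose block-coordinates equal one chosen pair. In any given slot the active hopping cells thus form a square sub-lattice with spacing $K\sqrt{a_{\sf h}}$.

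Third, I would verify condition~(ii) of the protocol model (no other active TX within $(1+\Delta)r$ of the receiver). Fix an active TX $u\in C$ serving its RX $v\in C'$, and let $u'\in C''$ be any other simultaneously active TX. Since $C''$ differs from $C$ by at least $K$ hopping-cells along at least one coordinate direction, and $C'$ differs from $C$ by only one hopping-cell in some coordinate direction, the grid-distance between $C'$ and $C''$ is at least $K-2$ along that direction. Taking the worst-case positions of $v$ and $u'$ within their respective cells, this yields
\begin{equation*}
d(v,u')\ \geq\ (K-2)\sqrt{a_{\sf h}}\ \geq\ \bigl(2\lceil(1+\Delta)\sqrt{5}\rceil-1\bigr)\sqrt{a_{\sf h}}\ >\ (1+\Delta)\sqrt{5\,a_{\sf h}}\ =\ (1+\Delta)r,
\end{equation*}
so the protocol model holds. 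Other relative positions of $C''$ (diagonal displacements, or $C''$ separated from $C'$ by a larger multiple of $K$) give strictly larger distances and thus are non-binding.

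Finally, since every active slot conveys $W$ bits/s/Hz of protocol-compliant transmission out of each active cell, and each cell is active in exactly $1/J$ of the slots of the TDMA cycle, the time-averaged aggregate rate per hopping cell is $W/J$. Sharing this rate in round-robin among the (at most $4\log n$ by Lemma~\ref{lemma:node_dis}(a)) transmitter nodes in the cell does not change the aggregate. The main obstacle in the argument is the geometric verification in the third step, where one must identify which displacement of $C''$ relative to $C'$ minimises $d(v,u')$ and confirm that the axis-aligned, nearest-neighbour configuration is indeed the binding case producing the $(K-2)\sqrt{a_{\sf h}}$ slack.
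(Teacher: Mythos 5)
Your proof is correct and follows essentially the same route as the paper's: choose $r=\sqrt{5a_{\sf h}}$ from the worst-case adjacent-cell diagonal, use a $K\times K$ spatial TDMA reuse pattern with $K=2\lceil(1+\Delta)\sqrt{5}\rceil+1$, and check that simultaneously active transmitters lie outside the $(1+\Delta)r$ exclusion disk of every receiver. The paper only sketches this (citing the corresponding lemma of El Gamal et al.), whereas you carry out the geometric verification explicitly and correctly, including the $(K-2)\sqrt{a_{\sf h}}>(1+\Delta)\sqrt{5a_{\sf h}}$ check.
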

\begin{proof}
This lemma is a well-known property, e.g., see \cite[Lemma 2]{GamalMammenPrabhakarShah:06}. For completeness, we briefly review proof steps here.
Consider an arbitrary transmission pair consisting of a transmitter node and its receiver node illustrated in Fig.~\ref{fig:tdma}. Clearly, the hopping distance is upper bounded by $\sqrt{5 a_{\sf h}}$ and hence, we choose the transmission range $r=\sqrt{5 a_{\sf h}}$ in the protocol model. Thus, the transmission is successful if there is no node simultaneously transmitting within the distance of $(1+\Delta)\sqrt{5a_{\sf h}}$ from the receiver node. This is satisfied if $J \geq \left(2\lceil(1+\Delta)\sqrt{5} \rceil+1\right)^2$. That is, the aggregate rate of $\frac{W}{J}$ is achievable if  $J\geq\left(2\lceil(1+\Delta)\sqrt{5} \rceil+1\right)^2$. Since this holds for all hopping cells, $R_n=\frac{W}{J}$ is achievable if $J\geq \left(2\lceil(1+\Delta)\sqrt{5} \rceil+1\right)^2$.
\end{proof}

\begin{lemma} \label{lemma:num_sources}
Suppose Regimes IV and V and assume that $\eta\in[0,1-(\alpha-\beta))$. For $\epsilon>0$ arbitrarily small, each node can be a source node of at most $n^{1-\eta-(\alpha-\beta)+\epsilon}$ nodes in its traffic cell whp.
\end{lemma}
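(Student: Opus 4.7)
The plan is to fix an arbitrary node $u$ lying in its traffic cell $T$ and to control $X_u$, the number of destinations in $T$ that designate $u$ as their source, via a layered concentration argument. Throughout, I would exploit that both the cache contents and the choice of source among candidates are uniform and independent, which lets me express $X_u$ as a sum of conditionally independent Bernoulli indicators.

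First, I would condition on the high-probability event from Lemma~\ref{lemma:node_dis}(b), applied with $a=\eta$, that every traffic cell contains $N_c = (1\pm\delta) n^{1-\eta}$ nodes. Fixing such a cell $T$ containing $u$, for each file $f$ the number of candidate sources in $T$, namely $C_f := |\{w \in T : f \in \Cc_w\}|$, is binomial with mean $\mu := N_c \cdot M/m = \Theta(n^{1-\eta-(\alpha-\beta)})$. Since $\eta < 1-(\alpha-\beta)$, $\mu$ grows polynomially in $n$, so a Chernoff bound together with a union bound over the $m \leq n^\alpha$ files shows that $C_f \in [\mu/2, 2\mu]$ simultaneously for every $f \in \Cc_u$, whp.

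Next, writing $X_u = \sum_{v \in T \setminus \{u\}} \mathbf{1}[v \text{ picks } u]$ and conditioning on the caches together with the good event of the previous step, I would compute
\begin{equation*}
\E[X_u \mid \text{caches}] \;=\; (N_c - 1) \sum_{f \in \Cc_u} \frac{p_r(f)}{C_f} \;\leq\; \frac{2(N_c-1)}{\mu}\, q_u, \qquad q_u := \sum_{f \in \Cc_u} p_r(f).
\end{equation*}
Because the cache $\Cc_u$ is an independently drawn uniform size-$M$ subset of the library, $\E[q_u] = M/m$; a sampling-without-replacement concentration inequality (Hoeffding--Serfling) then yields $q_u \leq (M/m)\,n^{\epsilon/3}$ whp. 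Substituting back gives $\E[X_u \mid \text{caches}] \leq 2\, n^{\epsilon/3}$, which is dominated by the target $n^{1-\eta-(\alpha-\beta)+\epsilon}$ since $1-\eta-(\alpha-\beta)>0$.

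Finally, a further Chernoff bound on the conditionally independent Bernoullis that define $X_u$ upgrades the conditional-mean bound to a high-probability bound on $X_u$ itself, losing at most another $n^{\epsilon/3}$ factor, and a union bound over the at most $n$ candidate sources promotes the statement to hold for every node $u$ simultaneously. The main obstacle is the Hoeffding--Serfling step: its success rests on the decentralized uniform caching spreading $\Cc_u$ independently across the library so that the random partial sum $q_u$ tightly concentrates around its mean, and this is precisely the step that would need to be revisited under strongly concentrated popularities, which motivates the tilted caching strategy of Section~\ref{sec:improved_throughput}.
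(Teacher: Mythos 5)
Your route is genuinely different from the paper's, and the difference is where the trouble hides. The paper never tracks the uniform tie-breaking among candidate sources: it upper-bounds the number of destinations of a node $i$ by the number of nodes in its traffic cell whose \emph{request lies in} $\Cc_i$, models that count as a ${\rm Binomial}(n_1,M/m)$ variable with $n_1=(1+\delta)n^{1-\eta}$, and applies the relative-entropy Chernoff bound at the threshold $k=n^{1-\eta-(\alpha-\beta)+\epsilon}$, which sits just above the binomial mean $n_1M/m=\Theta(n^{1-\eta-(\alpha-\beta)})$ --- this is precisely why the lemma's exponent is what it is. You instead keep the $1/C_f$ factor and aim for a conditional mean of order $n^{\epsilon/3}$, i.e., a conclusion far stronger than the lemma states; that should already have been a warning sign, since the lemma is invoked for \emph{arbitrary} popularity distributions (it feeds Theorem \ref{thm:main_result}, which is universal).

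The genuine gap is the Hoeffding--Serfling step. In Regime IV the quantity $q_u=\sum_{f\in\Cc_u}p_r(f)$ does not concentrate within a factor $n^{\epsilon/3}$ of its mean $M/m$ for non-flat popularities: the obstruction is not variance but a single heavy atom. Whenever the most popular file lands in $\Cc_u$ --- which happens for $\Theta(nM/m)\to\infty$ nodes whp --- one has $q_u\ge p_r(1)$, and already for a Zipf law with $\gamma<1$ (squarely inside Definition \ref{def:distribution_class}) $p_r(1)=\Theta(m^{\gamma-1})$, so $q_u\ge p_r(1)\gg (M/m)n^{\epsilon/3}$ whenever $\alpha\gamma>\beta+\epsilon/3$ (e.g., $\alpha=1$, $\gamma=0.9$, $\beta=1/2$); Hoeffding--Serfling cannot help because its exponent is divided by the squared range $p_r(1)^2$ of the sampled population, which is far too large here. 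For such nodes your own formula gives $\E[X_u\mid \mbox{caches}]\ge (N_c-1)p_r(1)/C_1=\Theta((m/M)p_r(1))=\Theta(n^{\alpha\gamma-\beta})$, which is not $O(n^{\epsilon/3})$, and the trivial fallback $q_u\le 1$ only yields $O(m/M)=O(n^{\alpha-\beta})$, which also exceeds the lemma's threshold once $\eta>1-2(\alpha-\beta)$ --- in particular for the choice $\eta=1-(\alpha-\beta)-\epsilon'$ actually used in Section \ref{subsubsec:throughput1}. So the step you flag as fragile is exactly where the argument breaks, and it breaks already for the heavy-tailed distributions the theorem is meant to cover, not only for ``strongly concentrated'' ones. (Your argument is fine in Regime V, where $M/m$ is constant and $q_u\le 1\le (M/m)n^{\epsilon/3}$ trivially, and for uniform popularity, where $q_u\equiv M/m$; it is worth noting that the paper's own binomial model also implicitly sets the per-node success probability to $M/m$, which is exact only in the uniform case, so the correlation induced by popular files is glossed over there as well.) To prove the lemma as stated you should drop the $1/C_f$ refinement and follow the paper's cruder count of in-cell requesters of files in $\Cc_u$.
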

\begin{proof}
Let $B_i(k)$ denote the event that node $i$ becomes a source node for less than $k$ nodes.
Denote $n_1=(1+\delta)n^{1-\eta}$.
Then, we have
\begin{align} \label{eq:num_source}
\mathbb{P}\left(\cap_{i\in[1:n]}B_i(k)\right)&=1-\mathbb{P}\left(\cup_{i\in[1:n]}B^c_i(k)\right)\nonumber\\
&\overset{\text {whp}}{\geq}1-n\sum_{j=k}^{n_1}{n_1 \choose j}\left(\frac{M}{m}\right)^j\left(1-\frac{M}{m}\right)^{n_1-j}\nonumber\\
&\geq 1-n\exp\left(-n_1 D\left(\frac{k}{n_1}\bigg\|\frac{M}{m}\right)\right)\nonumber\\
&=1-n\exp\left(-k\log\left(\frac{km}{n_1M}\right)-(n_1-k)\log\left(\frac{m(n_1-k)}{n_1(m-M)}\right)\right)\nonumber\\
&=1-\underbrace{n\exp\left(-k\right)\left(\frac{km}{n_1M}\right)^{-\ln(2)}}_{:=A} \underbrace{\exp\left(-(n_1-k)\right)\left(\frac{m(n_1-k)}{n_1(m-M)}\right)^{-\ln(2)}}_{:=B}
\end{align}
if $\frac{M}{m}<\frac{k}{n_1}<1$, where $D(a\| b)=a\log (\frac{a}{b})+(1-a)\log(\frac{1-a}{1-b})$ denotes the relative entropy for $a,b\in(0,1)$.
Here the first inequality follows from the union bound and holds whp since the number of nodes in each traffic cell is upper bounded by $n_1$ whp from Lemma~\ref{lemma:node_dis} (b), and the second inequality is due to the fact that
for $X\sim B(n,p)$,
\begin{align}
\mathbb{P}(X\geq k)\leq \exp\left(-n D\left(k/n\|p\right)\right)\mbox{ if }p<k/n<1.
\end{align}

First consider Regime IV.
Suppose that $k=n^{\tau}$ for $\tau\in(0,1]$.
Then the condition $\frac{M}{m}<\frac{k}{n_1}<1$ is given by $\frac{a_2(1+\delta)}{a_1}n^{1-\eta-(\alpha-\beta)}<n^{\tau}<(1+\delta)n^{1-\eta}$, which is satisfied as $n$ increases if
\begin{align} \label{eq:num_source_con}
1-\eta-(\alpha-\beta)<\tau\leq1-\eta.
\end{align}
Since $\tau>0$,
\begin{align}
A=n \exp\left(-n^{\tau}\right) \left(\frac{a_1}{a_2(1+\delta)}n^{\tau-1+\eta+(\alpha-\beta)}\right)^{-\ln (2)}
\end{align}
converges to zero as $n$ increases.
Furthermore
\begin{align}
B=&\exp\left(-((1+\delta)n^{1-\eta}-n^{\tau})\right)\left(\frac{a_1n^{\alpha}((1+\delta)n^{1-\eta}-n^{\tau})}{(1+\delta)n^{1-\eta}(a_1n^{\alpha}-a_2n^{\beta})}\right)^{-\ln (2)}
\end{align}
converges to zero as $n$ increases if $\tau\leq 1-\eta$.
In summary, $\mathbb{P}\left(\cap_{i\in[1:n]}B_i(n^{\eta})\right)$ converges to zero as $n$ increases if \eqref{eq:num_source_con} holds.
Therefore, $\mathbb{P}\left(\cap_{i\in[1:n]}B_i(n^{1-\eta-(\alpha-\beta)+\epsilon})\right)\to 0$ as $n\to\infty$ by setting $\tau=1-\eta-(\alpha-\beta)+\epsilon$ for $\epsilon>0$ arbitrarily small, implying that each node becomes a source node of at most $n^{1-\eta-(\alpha-\beta)+\epsilon}$ nodes whp for Regime IV.

Now consider Regime V. Suppose again that $k=n^{\tau}$ for $\tau\in(0,1]$.
Then the condition $\frac{M}{m}<\frac{k}{n_1}<1$ is given by $\frac{a_2(1+\delta)}{a_1}n^{1-\eta}<n^{\tau}<(1+\delta)n^{1-\eta}$, which is satisfied by setting $\tau=1-\eta$ since $a_1>a_2$ for Regime V so that we can find $\delta>0$ satisfying $\frac{a_2(1+\delta)}{a_1}<1$, see  Lemma~\ref{lemma:node_dis}~(b).
For this case, we have
\begin{align}
A=n \exp\left(-n^{1-\eta}\right) \left(\frac{a_1}{a_2(1+\delta)}\right)^{-\ln (2)}
\end{align}
and
\begin{align}
B=\exp\left(-\delta n^{1-\eta}\right)\nonumber\left(\frac{a_1\delta }{(a_1-a_2)(1+\delta)}\right)^{-\ln (2)}.
\end{align}
Hence, $A\to 0$ and $B\to 0$ as $n\to\infty$ since $\eta<1$.
Therefore, each node becomes a source node of at most $n^{1-\eta}$ nodes whp for Regime V.
\end{proof}

Based on Lemma~\ref{lemma:num_sources}, we derive an upper bound on the number of data paths that should be carried by each hopping cell in the following lemma, which is directly related to achievable throughput scaling laws.

\begin{lemma} \label{lemma:num_datapaths}
Suppose Regimes IV and V and assume that $\eta\in[0,1-(\alpha-\beta))$. For $\epsilon>0$ arbitrarily small, each hopping cell is required to carry at
most $n^{\frac{3(1-\eta)}{2}-(\alpha-\beta)+\epsilon}$ data paths whp.
\end{lemma}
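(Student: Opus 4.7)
The plan is to bound, for an arbitrary fixed hopping cell $H$, the number of SD data paths that cross $H$, and then extend to every hopping cell simultaneously by a union bound. Since source selection confines each SD pair to a single traffic cell, only paths whose source and destination lie in the same traffic cell as $H$ can possibly contribute. I will parameterize $H$ by its enclosing row--strip $\Rc(H)$ (the horizontal slab of hopping cells in the same row of $H$'s traffic cell) and column--strip $\Cc(H)$ (the analogous vertical slab), each of area $\sqrt{a_{\sf c}a_{\sf h}}=\sqrt{2 n^{-(1+\eta)}\log n}$.

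First I would count nodes in each strip. The expected node count per strip is $\sqrt{2n^{1-\eta}\log n}$, so a Chernoff bound of the same flavor as the one behind Lemma \ref{lemma:node_dis}, union--bounded over the polynomially many row and column strips across all $n^{\eta}$ traffic cells, yields that \emph{every} row strip and column strip contains $\Theta\!\bigl(\sqrt{n^{1-\eta}\log n}\bigr)$ nodes whp. Call this event $\Ec_1$. Next I would characterize when the HDP/VDP path of an SD pair $(s,d)$ crosses $H$, using the routing rule depicted in Fig.~\ref{fig:scheme}: this happens iff either (i) $s\in\Rc(H)$ and the column index of $H$ lies between those of $s$ and $d$ (HDP crossing), or (ii) $d\in\Cc(H)$ and the row index of $H$ lies between those of $s$ and $d$ (VDP crossing). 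Consequently the number of SD paths through $H$ is upper bounded by
\[
\bigl|\{s\in\Rc(H)\}\bigr|\cdot\max_{s\in\Uc}\bigl|\{d : s\text{ serves }d\}\bigr|\;+\;\bigl|\{d\in\Cc(H)\}\bigr|,
\]
because each destination is the endpoint of exactly one SD pair (so the VDP contribution costs at most one path per destination in $\Cc(H)$), while a source in $\Rc(H)$ contributes at most one HDP segment through $H$ per destination it serves.

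On $\Ec_1$ intersected with the whp event from Lemma \ref{lemma:num_sources}, the first summand is $O(\sqrt{n^{1-\eta}\log n})\cdot n^{1-\eta-(\alpha-\beta)+\epsilon/2}$ and the second is $O(\sqrt{n^{1-\eta}\log n})$; the first dominates. Absorbing the $\sqrt{\log n}$ factor and the $\epsilon/2$ slack into an arbitrarily small $\epsilon>0$ and taking a final union bound over the at most $n/(2\log n)$ hopping cells preserves the whp conclusion, yielding the claimed bound $n^{3(1-\eta)/2-(\alpha-\beta)+\epsilon}$. The main obstacle will be keeping the strip--wise node count tight \emph{uniformly} across every row and column strip in every traffic cell, i.e., checking that the Chernoff tail survives the union bound over polynomially many strips (which it does, since the expected strip count grows at least polylogarithmically in $n$), and cleanly intersecting this geometric event with the already--whp event of Lemma \ref{lemma:num_sources} without introducing spurious coupling between the random caching realization and the random node placement.
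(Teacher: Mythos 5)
Your argument is correct and follows essentially the same route as the paper's proof: bound the number of nodes in the row/column strip of area $\sqrt{a_{\sf c}a_{\sf h}}$ by $\Theta\bigl(n^{(1-\eta)/2}\sqrt{\log n}\bigr)$ whp and multiply by the per-source multiplicity bound of Lemma~\ref{lemma:num_sources}, with the HDP term dominating. The only immaterial differences are that you use a direct Chernoff bound on the strips where the paper applies Lemma~\ref{lemma:node_dis}(a) cell-by-cell, and your VDP count is slightly tighter because you exploit that each destination belongs to exactly one SD pair.
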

\begin{proof}
First consider the number of HDPs that must be carried by an arbitrary hopping cell, denoted by $N_{\sf hdp}$. By assuming that all HDPs of the nodes in the hopping cells located at the same horizontal line pass through the considered hopping cell, we have an upper bound on $N_{\sf hdp}$. Since the total area of these cells is given by
\begin{align} \label{eq:num_nodes_area}
\sqrt{a_{\sf c}a_{\sf h}}&=\sqrt{n^{-\eta}\frac{2\log n}{n}}=n^{\frac{1-\eta}{2}}\frac{1}{\sqrt{2\log n}}\frac{2\log n}{n},
\end{align}
the number of nodes in that area is upper bounded by
\begin{align} \label{eq:num_nodes}
n^{\frac{1-\eta}{2}}\frac{1}{\sqrt{2\log n}}2\log n=n^{\frac{1-\eta}{2}}\sqrt{2\log n}
\end{align}
whp from Lemma~\ref{lemma:node_dis} (a).
Moreover, each of these nodes may become a source node of multiple nodes within the same traffic cell. Therefore, from Lemma~\ref{lemma:num_sources} and \eqref{eq:num_nodes}
\begin{align}
N_{\sf hdp}&\overset{\text{whp}}{\leq} n^{1-\eta-(\alpha-\beta)+\epsilon'}n^{\frac{1-\eta}{2}}\sqrt{2\log n} \nonumber\\
&=n^{\frac{3(1-\eta)}{2}-(\alpha-\beta)+\epsilon'}\sqrt{2\log n}
\end{align}
for $\epsilon'>0$ arbitrarily small.
The same analysis holds for VDPs.
In conclusion, each hopping cell carries at most $n^{\frac{3(1-\eta)}{2}-(\alpha-\beta)+\epsilon}$ data paths whp for $\epsilon>0$ arbitrarily small, which completes the proof.
\end{proof}

We are now ready to prove that \eqref{eq:throughput1} is achievable whp for Regimes IV and V.
Let $\epsilon'>0$ be an arbitrarily small constant satisfying that $1-(\alpha-\beta)-\epsilon'>0$, which is valid for Regimes IV and V since $\alpha-\beta\in[0,1)$.
Then set $\eta=1-(\alpha-\beta)-\epsilon'$, which determines the size of each traffic cell. From Lemma~\ref{lem:outage}, every node can find its source node within its traffic cell whp. From Lemma~\ref{lemma:aggregate rate}, setting $J=\left(2\lceil(1+\Delta)\sqrt{5} \rceil+1\right)^2$, each hopping cell is able to achieve the aggregate rate of
\begin{equation} \label{eq:agg_rate}
R_n=W/\left(2\lceil(1+\Delta)\sqrt{5} \rceil+1\right)^2.
\end{equation}
Furthermore, from Lemma \ref{lemma:num_datapaths}, the number of data paths that each hopping cell needs to perform is upper bounded by
\begin{equation} \label{eq:num_datapaths}
n^{\frac{3(1-\eta)}{2}-(\alpha-\beta)+\epsilon'}=n^{\frac{\alpha-\beta}{2}+\frac{5}{2}\epsilon'}
\end{equation}
whp, where we used $\eta=1-(\alpha-\beta)-\epsilon'$.

Since each hopping cell serves multiple data paths using round-robin fashion, each data path is served with a rate of at least \eqref{eq:agg_rate} divided by  \eqref{eq:num_datapaths} whp.
Therefore, an achievable per-node throughput is given by
\begin{align}
T_n
& = \frac{W}{\left(2\lceil(1+\Delta)\sqrt{5} \rceil+1\right)^2}n^{-\frac{\alpha-\beta}{2}-\frac{5}{2}\epsilon'}\geq n^{-\frac{\alpha-\beta}{2}-\epsilon}
\end{align}
whp for $\epsilon>0$ arbitrarily small.
In conclusion, \eqref{eq:throughput1} is achievable whp for Regimes IV and V.

\subsection{Regime III}  \label{subsec:scheme2}
In this subsection, we prove that
\begin{align} \label{eq:throughput2}
T_n= n^{-\frac{1}{2}-\epsilon}
\end{align}
is achievable whp assuming that $\alpha-\beta=1$ and $a_1=a_2$, where $\epsilon>0$ is arbitrarily small.
Hence the same $T_n$ is also achievable whp for $\alpha-\beta=1$ and any $a_1\leq a_2$, which corresponds to Regime III.

From now on, assume that $\alpha-\beta=1$ and $a_1=a_2$.
For this case, the total number of files that can be stored by $n$ nodes (i.e., the total number of files stored in the network) is exactly the same as the number of files in the library (i.e., $nM=m$).
We propose a {\em centralized} file placement and a {\em globally} multihop protocol as follows.
\\\\
{\bf Centralized file placement}: 
It can be seen that a distributed file placement might result in an outage, as seen from the analysis in Lemma~\ref{lem:outage}. Instead, we employ a simple centralized file placement for which all distinct $m$ files (in the library) are randomly stored in the total memories of $n$ nodes. Hence, the network can contain all $m$ files, thus being able to avoid an outage.
\\\\
{\bf Globally multihop protocol}: As explained before, the traffic cell should be equal to the entire network (i.e., $\eta = 0$ in Section~\ref{subsec:scheme1}), in order to avoid an outage. Namely, $n$ SD pairs are located over the entire network. Hence, we can expect the same scaling result with the conventional wireless ad-hoc network in  \cite{GuptaKumar:00}, namely, no caching gain is expected.
\\

We briefly explain how to achieve \eqref{eq:throughput2} whp, since the procedures of proof are almost similar to Regimes IV and V. Similarly to Lemma~\ref{lemma:num_sources}, we can show that each node is able to be a source node of at most $n^{\epsilon}$ nodes whp for $\epsilon>0$ arbitrarily small. Then, following the analysis in Section~\ref{subsubsec:throughput1}, we can easily prove that \eqref{eq:throughput2} is achievable whp for Regime III.

\subsection{Single-Hop File Delivery} \label{subsec:singlehop_ach}

In this subsection, we prove Corollary \ref{co:singlehop_ach}.
First consider Regimes IV and V.
We apply the same file placement and source node selection policy described in Section \ref{subsubsec:scheme1}.
Then Lemma \ref{lem:outage} holds guaranteeing no outage whp by setting $\eta=1-(\alpha-\beta)-\epsilon'$, where $\epsilon'>0$ be an arbitrarily small constant satisfying that $1-(\alpha-\beta)-\epsilon'>0$.
Consider the file delivery. Instead of multihop routing within each traffic cell, each source directly transmits the required file to its destination within each traffic cell. Then, from the same analysis in Lemma \ref{lemma:aggregate rate}, each traffic cell achieves the aggregate rate of $R_n=\frac{W}{\left(2\lceil(1+\Delta)\sqrt{5} \rceil+1\right)^2}$ by  TDMA between traffic cells with reuse factor $\left(2\lceil(1+\Delta)\sqrt{5} \rceil+1\right)^2$.
Since there are at most $(1+\delta)n^{(\alpha-\beta)+\epsilon'}$ nodes in each traffic cell whp (Lemma \ref{lemma:node_dis} (b)), the rate of $\frac{R_n}{1+\delta}n^{-(\alpha-\beta)-\epsilon'}$ is achievable whp for each file delivery.
Therefore, $T_n=n^{-(\alpha-\beta)-\epsilon}$ is achievable whp for Regimes IV and V, where $\epsilon>0$ is arbitrarily small.

Now consider Regime III. As the same reason in \ref{subsec:scheme2}, we assume $\alpha-\beta=1$  and $a_1=a_2$, and then apply  the same file placement and source node selection policy described in Section \ref{subsec:scheme2}, which guarantees no outage.
Then, from the direct file delivery by time-sharing between $n$ SD pairs, $T_n=\frac{1}{n}$ is achievable whp for Regime III.

\section{Converse} \label{sec:upper_bound}

In this section, we prove the upper bounds in Theorem \ref{thm:main_result-converse} assuming that the popularity distribution satisfies the condition in Definition \ref{def:distribution_class}.
We first introduce the following technical lemma.
\begin{lemma} \label{lemma:chernoff} Let $X$ follow a binomial distribution with parameters $l$ and $p$, i.e., $X \sim B(l,p)$. Then, for $k\in[0:lp]$,
\begin{align}  \label{eq:chernoff}
\mathbb{P}(X\leq k) \leq \exp\left(-\frac{1}{2p}\frac{(lp-k)^2}{l}\right).
\end{align} 
\end{lemma}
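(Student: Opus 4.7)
The plan is to prove this via the standard Chernoff bounding technique, followed by a convexity argument to pass from the relative-entropy form to the quadratic form on the right-hand side. The result is a classical multiplicative Chernoff bound for the lower tail of a binomial, but since the statement uses the specific functional form $(lp-k)^2/(2pl)$, I will outline both stages explicitly.

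First, I would apply Markov's inequality to the moment generating function. For any $t \leq 0$,
\begin{equation*}
\mathbb{P}(X \leq k) \;=\; \mathbb{P}\bigl(e^{tX} \geq e^{tk}\bigr) \;\leq\; e^{-tk}\,\mathbb{E}[e^{tX}] \;=\; e^{-tk}\bigl(pe^{t} + (1-p)\bigr)^{l}.
\end{equation*}
Writing $q = k/l$ and optimizing the exponent over $t \leq 0$, the first-order condition yields $e^{t^{\ast}} = q(1-p)/\bigl(p(1-q)\bigr)$, which lies in $(0,1]$ precisely because $k \leq lp$ implies $q \leq p$. Substituting $t^{\ast}$ back and simplifying the expression $-tq + \log(pe^{t}+1-p)$ collapses to $-D(q\|p)$, where $D(q\|p) = q\log(q/p) + (1-q)\log((1-q)/(1-p))$ is the binary relative entropy. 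Hence the first key inequality is
\begin{equation*}
\mathbb{P}(X \leq k) \;\leq\; \exp\!\bigl(-\,l\,D(k/l \,\|\, p)\bigr).
\end{equation*}

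Second, I would lower-bound the relative entropy by the desired quadratic. Define $f(q) = D(q\|p) - (p-q)^{2}/(2p)$ on $(0,1)$. A direct computation gives $f(p) = 0$, $f'(p) = 0$, and
\begin{equation*}
f''(q) \;=\; \frac{1}{q} + \frac{1}{1-q} - \frac{1}{p}.
\end{equation*}
For $q \in (0,p]$ we have $1/q \geq 1/p$, so $f''(q) \geq 1/(1-q) > 0$, i.e., $f$ is convex on $(0,p]$. Combined with $f(p) = f'(p) = 0$, convexity forces $f(q) \geq 0$ throughout $(0,p]$, which gives $D(q\|p) \geq (p-q)^{2}/(2p)$ for all $q \in [0,p]$. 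Multiplying by $l$ and setting $q = k/l$ yields $l\,D(k/l\|p) \geq (lp - k)^{2}/(2pl)$, and combining with the Chernoff inequality above produces the claimed bound.

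There is no substantial obstacle here; both steps are standard. The only subtle point worth stressing is the range restriction $k \in [0, lp]$: it is used twice, once to guarantee $t^{\ast} \leq 0$ (so that the Markov step is valid), and once to ensure convexity of $f$ on the relevant interval. Strict positivity of $p$ and $1-p$ is implicit, which is harmless in all applications of this lemma later in the paper.
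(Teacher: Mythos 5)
Your proof is correct and is essentially the paper's approach spelled out in full: the paper simply states that the lemma ``follows immediately from the Chernoff bound,'' and your two-stage argument (exponential Markov bound yielding the relative-entropy exponent, then the pointwise bound $D(q\|p)\geq (p-q)^2/(2p)$ for $q\in[0,p]$ via convexity) is exactly the standard derivation of that lower-tail Chernoff bound. No gaps; the derivation is sound.
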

\begin{proof}
The proof follows immediately from the Chernoff bound.
\end{proof}
\subsection{Regimes I and II} \label{subsec:upper_bound1}
We introduce the following lemma, which demonstrates that a non-vanishing outage probability is inevitable for Regimes 1 and 2 even if centralized caching were allowed.
Therefore, a non-vanishing outage probability implied by
Lemma  \ref{lemma:outage1} yields that $T_{n} = 0$ whp for Regimes I and II.

\begin{lemma} \label{lemma:outage1}
Suppose Regimes I and II.
Let $N_{{\sf out},1}$ denote the number of nodes that they cannot find their requested files in the entire network. Then, we have $N_{{\sf out},1}\geq c_3 n$ whp for some constant $c_3>0$ independent of $n$.
\end{lemma}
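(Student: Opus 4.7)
The plan is to exploit the fact that in Regimes I and II the aggregate cache capacity $nM$ is either asymptotically negligible compared to the library size $m$ (Regime I) or only a strict constant fraction of $m$ (Regime II), so the heavy-tail assumption on $p_r$ forces a constant fraction of the random requests to fall on files that no node has stored, and then conclude by a Chernoff-type concentration for the number of such unlucky nodes.

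First I would bound the set of cacheable files. For any caching realization $\Gsf \in \Gc$, let $\Cc(\Gsf) \subseteq \Fc$ be the set of files stored by at least one node. Since each node caches at most $M$ files and there are $n$ nodes, the deterministic bound $|\Cc(\Gsf)| \leq nM$ holds. Using the descending ordering $p_r(1) \geq p_r(2) \geq \cdots$ of the popularity distribution, the probability mass captured by $\Cc(\Gsf)$ is maximized by caching the $nM$ most popular files, giving $\sum_{f \in \Cc(\Gsf)} p_r(f) \leq \sum_{i=1}^{nM} p_r(i)$ uniformly in $\Gsf$.

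Second, I would invoke the heavy-tail condition. In Regime I, $nM = a_2 n^{1+\beta} = o(n^\alpha)$, so for any fixed $c_1 \in (0,a_1)$ one has $nM \leq c_1 n^\alpha$ for all sufficiently large $n$; in Regime II, $nM = a_2 n^\alpha$ with $a_2 < a_1$, so one may pick $c_1 := a_2 \in (0,a_1)$ and again $nM \leq c_1 n^\alpha$. Definition~\ref{def:distribution_class} then supplies a constant $c_2 > 0$ such that $\limsup_{n} \sum_{i=1}^{nM} p_r(i) \leq 1 - c_2$. Hence, for all $n$ large enough and every realization of $\Gsf$, the probability that an individual node's request falls outside $\Cc(\Gsf)$ is at least $p_* := c_2/2$.

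Third, I would concentrate. Conditional on $\Gsf$, the indicators $\{\mathbb{1}(f_u \notin \Cc(\Gsf)) : u \in \Uc\}$ are mutually independent Bernoulli variables with success probability at least $p_*$, because the demands $f_u$ are i.i.d.\ and independent of the cache placement. Thus $N_{{\sf out},1}$ stochastically dominates $B(n,p_*)$, and Lemma~\ref{lemma:chernoff} applied with $l = n$, $p = p_*$ and $k = (p_*/2)n$ yields
\begin{equation}
\mathbb{P}\bigl(N_{{\sf out},1} \leq (p_*/2)\, n \,\big|\, \Gsf\bigr) \leq \exp\!\left(-\tfrac{n p_*}{8}\right),
\end{equation}
which vanishes as $n \to \infty$ uniformly in $\Gsf$; integrating over $\Gsf$ gives $N_{{\sf out},1} \geq c_3 n$ whp with $c_3 := p_*/2$, as required. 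The only real obstacle is the bookkeeping needed to choose a legitimate $c_1 < a_1$ in each of the two regimes so that Definition~\ref{def:distribution_class} applies cleanly; once that is done, the cache-union counting and the Chernoff step are routine.
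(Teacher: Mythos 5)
Your proposal is correct and follows essentially the same route as the paper's proof: bound the number of distinct cacheable files by $nM$, invoke Definition~\ref{def:distribution_class} (with $nM \leq c_1 n^{\alpha}$ for a suitable $c_1 < a_1$ in each regime) to lower-bound the per-node outage probability by a constant, and then apply the Chernoff bound of Lemma~\ref{lemma:chernoff} to conclude that a linear fraction of nodes is in outage whp. Your treatment is slightly more explicit about conditioning on the cache placement and about the worst case being the $nM$ most popular files, but the argument is the same.
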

\begin{proof}
The total number of files that are able to be stored by the entire network is given by $nM=a_2n^{1+\beta}$.
Hence the probability that each node cannot find its requested file in the entire network is lower bounded by
\begin{align} \label{eq:upper_outage1}
1-\sum_{i=1}^{a_2n^{1+\beta}}p_r(i):=p_{{\sf out},1}.
\end{align}
Then, for  $\mu\in[0,p_{{\sf out},1}]$, we have
\begin{align} \label{eq:Pr_N_out1}
\mathbb{P}(N_{{\sf out},1}\geq \mu n)&\overset{(a)}{\geq} \sum_{i=\mu n}^{n}{n \choose i} p_{{\sf out},1}^i(1-p_{{\sf out},1})^{n-i}\nonumber\\
&\geq 1-\sum_{i=1}^{\mu n}{n \choose i} p_{{\sf out},1}^i(1-p_{{\sf out},1})^{n-i}\nonumber\\
&\overset{(b)}{\geq}  1-\exp\left(-\frac{(p_{{\sf out},1}-\mu)^2}{2 p_{{\sf out},1}}n\right),
\end{align}
where $(a)$ follows from \eqref{eq:upper_outage1} and the fact that each node requires a file independent of other nodes and $(b)$ follows from Lemma \ref{lemma:chernoff}.
Here, the condition $\mu\in[0,p_{{\sf out},1}]$ is required to apply Lemma \ref{lemma:chernoff}.

Now consider $p_{{\sf out},1}$ defined in \eqref{eq:upper_outage1}.
Notice that $a_2n^{1+\beta}< a_1n^{\alpha}$ as $n\to\infty$ for both regimes because $\alpha-\beta>1$ for Regime I and $\alpha-\beta=1$ and $a_1>a_2$ for Regime II.
Hence, from Definition~\ref{def:distribution_class}, $\lim_{n\to\infty}p_{{\sf out},1}\geq c_4$ for some constant $c_4>0$ independent of $n$.
Then setting $\mu=\frac{c_4}{2}$ in \eqref{eq:Pr_N_out1}, which satisfies $\mu\in[0,p_{{\sf out},1}]$ as $n\to\infty$, yields that
$\mathbb{P}\left(N_{{\sf out},1}\geq \frac{c_4}{2}n\right)\to 1$ as $n\to\infty$.
Therefore, $N_{{\sf out},1}\geq c_3 n$ whp for some constant $c_3>0$ independent of $n$.
\end{proof}

\subsection{Regimes III and IV} \label{subsec:upper_bound2}

The key ingredient to establish the upper bounds in Theorem \ref{thm:main_result-converse} for Regimes III and IV is to characterize the minimum 
distance for file transmission that a non-zero fraction of SD pairs must go through, 
which is given in Lemma~\ref{lemma:outage2} below. Then, as a consequence of the protocol model which does not allow concurrent transmission 
within a  circle of radius $(1+\Delta)r$ around each intended receiver, we are able to determine how many SD pairs can be simultaneously 
activate at a given time slot, which is directly related to the desired throughput upper bounds.

\begin{figure}[t!]
\begin{center}
\includegraphics[scale=0.9]{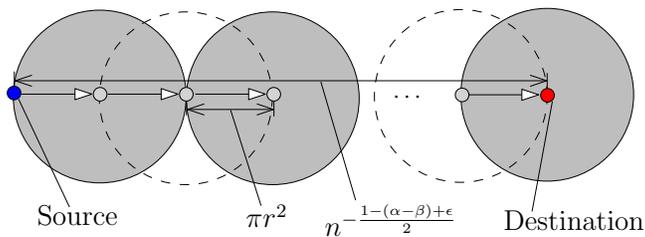}
\end{center}
\vspace{-0.15in}
\caption{A lower bound on the exclusive area occupied by the multihop transmission of a SD pair with distance $n^{-\frac{1-(\alpha-\beta)+\epsilon}{2}}$.}
\label{fig:upper}
\vspace{-0.2in}
\end{figure}

\begin{lemma} \label{lemma:outage2}
Suppose Regimes III and IV.
For $\epsilon>0$ arbitrarily small, let $N_{{\sf out},2}$ denote the number of nodes that they cannot find their requested files within the distance of $n^{-\frac{1-(\alpha-\beta)}{2}-\epsilon}$ from their positions.
Then, we have $N_{{\sf out},2}\geq c_5n$ whp for some constant $c_5>0$ independent of $n$.
\end{lemma}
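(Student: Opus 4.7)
The plan is to mimic Lemma~\ref{lemma:outage1} on a local scale. Set $r \eqdef n^{-(1-(\alpha-\beta))/2-\epsilon}$ and partition the unit square into square cells of side $r$, so that for every node $u$ lying in a cell $C$ the disk $D(u,r)$ is contained in the $3\times 3$ block of cells centered at $C$. Since $\alpha-\beta>0$ in Regimes III and IV, the cell area $r^2 = n^{-(1-(\alpha-\beta))-2\epsilon}$ is of the form $n^{-a}$ with $a\in[0,1)$ for $\epsilon$ sufficiently small, so Lemma~\ref{lemma:node_dis}(b) together with a union bound over the polynomially many cells yields whp that every cell contains at most $(1+\delta)nr^2$ nodes. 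On this typicality event, the set $S_u$ of files cached within $D(u,r)$ satisfies $|S_u| \leq 9(1+\delta) M n r^2 = 9(1+\delta)a_2\,n^{\alpha-2\epsilon}$ for every $u$ simultaneously, which for any fixed $c_1\in(0,a_1)$ is at most $c_1 n^\alpha$ once $n$ is large.

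Because $p_r(1)\geq p_r(2)\geq\cdots$, the probability that the (independent) demand of $u$ lies in $S_u$ is maximized by the top $|S_u|$ indices:
\begin{align*}
\PP(f_u\in S_u\mid \Gsf,\Psf) = \sum_{i\in S_u}p_r(i) \leq \sum_{i=1}^{c_1 n^\alpha}p_r(i) \leq 1-c_2
\end{align*}
for $n$ large, where $c_2>0$ is furnished by Definition~\ref{def:distribution_class}. Setting $X_u \eqdef \mathbf{1}\{f_u\notin S_u\}$ thus gives $\PP(X_u=1\mid \Gsf,\Psf)\geq c_2$ uniformly in $u$. Conditioned on $(\Gsf,\Psf)$, each $X_u$ is a function of $f_u$ alone, so the $\{X_u\}$ are conditionally independent and $N_{{\sf out},2}=\sum_u X_u$ stochastically dominates $B(n,c_2)$. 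Applying Lemma~\ref{lemma:chernoff} to $n-N_{{\sf out},2}$ (i.e., the standard lower-tail Chernoff bound) yields $\PP(N_{{\sf out},2}<c_2 n/2\mid \Gsf,\Psf)\leq\exp(-cn)$ on the typicality event, and combining with the $o(1)$ failure probability of that event gives $N_{{\sf out},2}\geq c_5 n$ whp with $c_5=c_2/2$.

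The main technical hurdle is establishing the uniform bound $|S_u|\leq c_1 n^\alpha$: because disks $D(u,r)$ of nearby nodes overlap and share cached content, the events $\{X_u\}$ are dependent through the random placement $\Gsf$ and positions $\Psf$. The cellular partition disentangles this geometric coupling, reducing the global requirement to a concentration statement for node counts over a polynomial number of cells, after which the tail estimate only exploits the i.i.d.\ nature of the demands.
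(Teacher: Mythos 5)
Your proposal is correct and follows essentially the same route as the paper's proof: bound the number of distinct files cacheable within distance $n^{-\frac{1-(\alpha-\beta)}{2}-\epsilon}$ of any node by $O(n^{\alpha-\epsilon'})$ via Lemma~\ref{lemma:node_dis}(b), invoke Definition~\ref{def:distribution_class} to get a constant per-node miss probability, and conclude with conditional independence of the demands and the Chernoff bound of Lemma~\ref{lemma:chernoff}. Your $3\times 3$-block cell covering and the explicit conditioning on $(\Gsf,\Psf)$ are in fact slightly more careful renderings of steps the paper treats tersely (it applies the node-count lemma directly to a disk around each node), but the argument is the same.
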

\begin{proof}
Let $\epsilon'>0$ be an arbitrarily small constant satisfying that $1-(\alpha-\beta)+\epsilon'\in [0,1)$, which is valid for Regimes III and IV since $\alpha-\beta\in(0,1]$.
For simplicity, denote $\zeta=\frac{1-(\alpha-\beta)+\epsilon'}{2}$.
Let $N_{\sf file}$ be the total number of files that are able to be stored by the area of radius $n^{-\zeta}$.
From Lemma \ref{lemma:node_dis} (b), the number of nodes in that area is upper bounded by $(1+\delta)n^{1-2\zeta}$ whp because $2\zeta\in[0,1)$.
Hence $N_{\sf file}\leq (1+\delta)n^{1-2\zeta}M=a_2(1+\delta)n^{\alpha-\epsilon'}$ whp.
Then the probability that each node cannot find its requested file within the radius of $n^{-\zeta}$ is lower bounded by
\begin{align} \label{eq:upper_outage2}
1-\sum_{i=1}^{N_{\sf file}}p_r(i)\overset{\text{whp}}{\geq} 1-\sum_{i=1}^{a_2(1+\delta)n^{\alpha-\epsilon'}}p_r(i):=p_{{\sf out},2}.
\end{align}
Then similarly to \eqref{eq:Pr_N_out1}, we have
\begin{align} \label{eq:Pr_N_out2}
\mathbb{P}(N_{{\sf out},2}\geq \mu n)\overset{\text{whp}}{\geq} 1-\exp\left(-\frac{(p_{{\sf out},2}-\mu)^2}{2 p_{{\sf out},2}}n\right)
\end{align}
for  $\mu\in[0,p_{{\sf out},2}]$.
From Definition~\ref{def:distribution_class}, $\lim_{n\to\infty}p_{{\sf out},2}\geq c_6$ for some constant $c_6>0$ independent of $n$.
More specifically, we can apply Definition~\ref{def:distribution_class} because $a_2(1+\delta)n^{\alpha-\epsilon'}< a_1 n^{\alpha}$ as $n\to\infty$.
Hence setting $\mu=\frac{c_6}{2}$ in \eqref{eq:Pr_N_out2}, which satisfies $\mu\in[0,p_{{\sf out},2}]$ as $n\to\infty$, yields that
$\mathbb{P}\left(N_{{\sf out},2}\geq \frac{c_6}{2}n\right)\to 1$ as $n\to\infty$.
Therefore, $N_{{\sf out},2}\geq c_5 n$ whp for some constant $c_5>0$ independent of $n$.
\end{proof}

Based on Lemma~\ref{lemma:outage2}, we can prove that the throughput of any scheme must satisfy
\begin{align} \label{eq:upper_bound_3_4}
T_n \overset{\text{whp}}{\leq} n^{-\frac{\alpha-\beta}{2}+\epsilon}
\end{align}
for Regimes III and IV, where $\epsilon>0$ is arbitrarily small.
Specifically, from Lemma~\ref{lemma:outage2}, for $\epsilon'>0$ arbitrarily small, there are at least $c_{5}n$ SD pairs whose distances are larger than $n^{-\frac{1-(\alpha-\beta)}{2}-\epsilon'}$ whp, where $c_{5}>0$ is some constant and independent of $n$.
Then, we restrict only on the delivery of the requests of such SD pairs, obtaining clearly an upper bound on the per-node throughput.
First, we consider the exclusive area (i.e., the area to prohibit the transmission for other SD pairs) occupied by the multihop transmission of a SD pair with distance $n^{-\frac{1-(\alpha-\beta)}{2}-\epsilon'}$. In order to obtain a lower bound on such area, we assume that $\Delta=0$ and each receiver node is located at the distance of $r$ from its transmitter node along with the SD line (see Fig.~\ref{fig:upper}). Then, the exclusive area is lower bounded (i.e., only taking the shaded areas in Fig.~\ref{fig:upper}) such as
\begin{equation}
\frac{2\pi r^2 n^{-\frac{1-(\alpha-\beta)}{2}-\epsilon'}}{2r} = \pi r n^{-\frac{1-(\alpha-\beta)}{2}-\epsilon'}.
\end{equation}
Hence, the maximum number of SD pairs guaranteeing a rate of $W$ over the entire network of a unit area is upper 
bounded by $\frac{1}{\pi r}n^{\frac{1-(\alpha-\beta)}{2}+\epsilon'}$ whp. 
As a result, the sum throughput $S_n$ (summing the rate of all users) is upper bounded by 
$S_n \overset{\text{whp}}{\leq} \frac{W}{\pi r}n^{\frac{1-(\alpha-\beta)}{2}+\epsilon'}$. Notice that for a given sum throughput $S_n$, 
the symmetric per-user rate is trivially upper bounded by $T_n \leq S_n/n$. Hence, we have
\begin{align} \label{eq:upper_bound_r}
T_n \leq  \frac{S_n}{n} \overset{\text{whp}}{\leq} \frac{W}{\pi r}n^{\frac{-1-(\alpha-\beta)}{2}+\epsilon'}.
\end{align}
That the above bound on $T_n$ increases as $r$ decreases. 
On the other hand, it was shown in \cite[Section V]{GuptaKumar:00} that the absence of isolated nodes is a necessary condition for a 
non-zero $T_{n}$ requiring that
\begin{align} \label{eq:connectivity}
r\overset{\text{whp}}{\geq} c_7\sqrt{\log n/n}
\end{align}
for some constant $c_7>0$ independent of $n$. Therefore, from \eqref{eq:upper_bound_r} and \eqref{eq:connectivity}, we have an upper bound on the per-node throughput as
\begin{align}
T_n&\overset{\text{whp}}{\leq}\frac{W}{\pi c_7}n^{-\frac{\alpha-\beta}{2}-\frac{\log\log n}{2\log n}+\epsilon'}\nonumber\\
&\leq n^{-\frac{\alpha-\beta}{2}+\epsilon}
\end{align}
for $\epsilon>0$ arbitrarily small. In conclusion, the upper bound in \eqref{eq:upper_bound_3_4} holds whp for Regimes III and IV.

\subsection{Regime V} \label{subsec:upper_regime5}
In this subsection, we prove that the throughput of any scheme must satisfy
\begin{align} \label{eq:upper_bound_5}
T_n \overset{\text{whp}}{\leq} \frac{c_8}{\log n}
\end{align}
for Regime V, where $c_8>0$ is some constant independent of $n$.
The following lemma shows that at least a constant fraction of nodes have to download their requested files from other nodes, which will be used as the key ingredient to prove the upper bound in \eqref{eq:upper_bound_5}.

\begin{lemma} \label{lemma:outage3}
Suppose Regime V.
Let $N_{{\sf out},3}$ denote the number of nodes that they cannot find their requested files in their own cache memories. Then, we have $N_{{\sf out},3}\geq c_9 n$ whp for some constant $c_9>0$ independent of $n$.
\end{lemma}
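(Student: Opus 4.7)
The plan is to follow the template of Lemmas \ref{lemma:outage1} and \ref{lemma:outage2}. In Regime V we have $m = a_1 n^\alpha$ and $M = a_2 n^\alpha$ with $a_1 > a_2$, so each node's cache can store at most an $a_2/a_1 < 1$ fraction of the library. I would first show that for any caching policy, each user individually has a probability bounded away from one of failing to find its requested file in its own cache; then I would aggregate over users using Lemma \ref{lemma:chernoff} exactly as in the proofs of Lemmas \ref{lemma:outage1} and \ref{lemma:outage2}.

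First, fix a node $u$. Regardless of which $M$ files the caching policy assigns to $Z_u$, the probability that the requested file lies in $Z_u$ is maximized when $Z_u$ equals the set of the $M$ most popular files (since demands and caches are independent and $p_r$ is ordered in descending fashion), so
\begin{equation}
\mathbb{P}(f_u \in Z_u \mid Z_u) \leq \sum_{i=1}^{a_2 n^\alpha} p_r(i),
\end{equation}
which gives $\mathbb{P}(f_u \notin Z_u \mid Z_u) \geq p_{{\sf out},3}$, where $p_{{\sf out},3} := 1 - \sum_{i=1}^{a_2 n^\alpha} p_r(i)$. Since $a_2 < a_1$ in Regime V, applying Definition \ref{def:distribution_class} with $c_1 = a_2$ yields $\liminf_{n\to\infty} p_{{\sf out},3} \geq c_{10}$ for some constant $c_{10}>0$ independent of $n$.

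Next I would aggregate across nodes. Conditional on the entire collection of caches $\{Z_v\}_{v \in \Uc}$, the events $\{f_u \notin Z_u\}_{u \in \Uc}$ are mutually independent because the demand vector $\fsf$ is i.i.d.\ and independent of the placement $\Gsf$. Consequently, conditional on the caches, $N_{{\sf out},3}$ stochastically dominates a binomial random variable $B(n, p_{{\sf out},3})$. Invoking Lemma \ref{lemma:chernoff} with $k = c_{10} n / 4$ and $p = p_{{\sf out},3}$ (noting that $k \leq n p_{{\sf out},3}$ for $n$ large since $p_{{\sf out},3} \geq c_{10}/2$ eventually) gives
\begin{equation}
\mathbb{P}\left(N_{{\sf out},3} \leq \tfrac{c_{10}}{4} n \mid \{Z_v\}\right) \leq \exp\left(-\tfrac{c_{10}^2}{32}\, n\right),
\end{equation}
which tends to zero. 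Removing the conditioning by the tower property yields $N_{{\sf out},3} \geq c_9 n$ whp with $c_9 = c_{10}/4$.

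The only subtle point is the conditioning argument underlying the stochastic dominance: the caches $\{Z_v\}$ may be correlated across nodes through the placement distribution $\Pi_c$, but it is the demands $\fsf$, independent across nodes and independent of $\Gsf$, that supply the independence needed for the Chernoff bound. Note also the qualitative contrast with Lemma \ref{lemma:outage1}, where the cache bound was $nM$ and one needed $nM < m$; here the bound is $M$ because each user can serve its own request only from its \emph{own} cache, which is what makes the heavy-tail condition bite even in Regime V where $nM \gg m$.
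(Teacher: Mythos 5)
Your proposal is correct and follows essentially the same route as the paper: lower-bound each node's individual miss probability by $p_{{\sf out},3}=1-\sum_{i=1}^{a_2 n^{\alpha}}p_r(i)$, invoke Definition~\ref{def:distribution_class} with $c_1=a_2<a_1$ to keep this bounded away from zero, and apply the Chernoff bound of Lemma~\ref{lemma:chernoff} to the binomial count of misses. The only difference is that you spell out the conditioning-on-caches/stochastic-dominance step that the paper leaves implicit in its ``similar to Lemmas~\ref{lemma:outage1} and~\ref{lemma:outage2}'' remark, which is a harmless (indeed welcome) elaboration.
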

\begin{proof}
Similar to the proof in Lemmas \ref{lemma:outage1} and \ref{lemma:outage2}, we have
\begin{align} \label{eq:Pr_N_out3}
\mathbb{P}(N_{{\sf out},3}\geq \mu n)\geq 1-\exp\left(-\frac{(p_{{\sf out},3}-\mu)^2}{2 p_{{\sf out},3}}n\right)
\end{align}
for  $\mu\in[0,p_{{\sf out},3}]$, where $p_{{\sf out},3}=1-\sum_{i=1}^{a_2 n^{\alpha}}p_r(i)$.
Since $a_2<a_1$ for Regime V, $\lim_{n\to\infty}p_{{\sf out},3}\geq c_{10}$ for some constant $c_10>0$ independent of $n$ from Definition \eqref{def:distribution_class}. Hence setting $\mu=\frac{c_{10}}{2}$ in \eqref{eq:Pr_N_out3} yields that
$\mathbb{P}\left(N_{{\sf out},3}\geq \frac{c_{10}}{2}n\right)\to 1$ as $n\to\infty$.
Therefore, $N_{{\sf out},3}\geq c_9 n$ whp for some constant $c_9>0$ independent of $n$.
\end{proof}

From Lemma \ref{lemma:outage3}, a non-vanishing fraction of nodes have to download their requested files from other nodes and, as a result, \eqref{eq:connectivity} should be satisfied for successful file delivery, see \cite[Section V]{GuptaKumar:00}.
From the protocol model, then, the rate of each file delivery is upper bounded by $W$ bits/sec/Hz and there are at most $\frac{1}{\pi c_7^2}\frac{n}{\log n}$ concurrent file deliveries in the network whp, from the bound in \eqref{eq:connectivity}. Therefore, 
$S_n \overset{\text{whp}}{\leq} \frac{W}{\pi c_7^2}\frac{n}{\log n}$ and $T_n \leq \frac{S_n}{n} \overset{\text{whp}}{\leq} \frac{W}{\pi c_7^2}\frac{1}{\log n}$.
In conclusion the upper bound in \eqref{eq:upper_bound_5} holds whp for Regime V.

\subsection{Single-Hop File Delivery} \label{subsec:singlehop_upper}
In this subsection, we prove Corollary \ref{co:singlehop_upper}.
For Regimes I and II, Lemma \ref{lemma:outage1} still holds, resulting that $T_n=0$ whp for these regimes.
Also, the same argument in Section \ref{subsec:upper_regime5} holds, resulting that \eqref{eq:upper_bound_5} whp for Regime V.

Now consider Regimes III and IV. From Lemma \ref{lemma:outage2}, if the file delivery for SD pairs with distance at least $n^{-\frac{1-(\alpha-\beta)}{2}-\epsilon}$ is restricted to single-hop transmission,  the exclusive area occupied by  each of those SD pairs is lower bounded by
\begin{align}
\pi n^{-(1-(\alpha-\beta))-\epsilon'}
\end{align}
whp for $\epsilon'>0$ arbitrarily small.
Then, as the same analysis in Section \ref{subsec:upper_bound2}, we have $S_n \overset{\text{whp}}{\leq} \frac{W}{\pi}n^{1-(\alpha-\beta)+\epsilon'}$, 
resulting that $T_n \overset{\text{whp}}{\leq} n^{-(\alpha-\beta)+\epsilon}$ for $\epsilon>0$ arbitrarily small for these regimes.

\section{Improved Achievable Throughput} \label{sec:improved_throughput}
In this section, we prove Theorem \ref{thm:improved_throughput} by assuming that user demands follow a Zipf popularity distribution with exponent $\gamma>1+\frac{1}{\alpha}$.

\subsection{File Placement and Delivery}
Similar to the case of Regime IV in Section \ref{subsec:scheme1}, i.e., $\alpha-\beta\in(0,1)$, a {\em distributed} file placement and a {\em local} multihop protocol are performed. In order to describe the proposed file placement, let $\epsilon_c>0$ be an arbitrarily small constant satisfying that
\begin{align} \label{eq:condition_e_c}
\beta+1-\frac{1}{\gamma}-\epsilon_c>0,
\end{align}
which is valid because
$\beta+1-\frac{1}{\gamma-1}>1-(\alpha-\beta)>0$, where the first inequality holds from the assumption $\gamma>1+\frac{1}{\alpha}$ and the second inequality holds for Regime IV since $\alpha-\beta\in(0,1)$.
Then, define
\begin{align} \label{eq:def_n_2}
n_2=n^{1-\min(1,\beta+1-1/(\gamma-1))+\epsilon_c/2}
\end{align}
and let $\mathcal{F}_{\sf{sub}}\subseteq \mathcal{F}$
denote the subset of the first (most probable) $(Mn_2)$ files in the library.
During the file placement phase, each node stores $M$ distinct files in its cache, chosen uniformly at random from $\mathcal{F}_{\sf{sub}}$ independently of other nodes.

During the file delivery phase, the same local mulithop described in Section \ref{subsec:scheme1} is performed.
To determine the size of each traffic cell, we set
\begin{align}  \label{eq:eta_value}
\eta=\min\left(1,\beta+1-\frac{1}{\gamma-1}\right)-\epsilon_c,
\end{align}
which is valid since $\eta\in(0,1)$ from \eqref{eq:condition_e_c}.
Then the number of nodes in each traffic cell is upper bounded by
\begin{align} \label{eq:upper_n_nodes}
(1+\delta)n_2n^{\epsilon_c/2}
\end{align}
whp and lower bounded by
\begin{align}  \label{eq:lower_n_nodes}
(1-\delta)n_2n^{\epsilon_c/2}
\end{align}
whp from Lemma \ref{lemma:node_dis} (b).

\subsection{Achievable Throughput}
In this subsection, we prove that
\begin{align} \label{eq:improved_throughput_rewrite}
T_{n} = n^{-\frac{1-\min(1,\beta+1-1/(\gamma-1))}{2}-\epsilon}
\end{align}
is achievable whp for Regime IV, where $\epsilon > 0$ is arbitrarily small.
The overall procedure is similar to the case of Regime IV in Section \ref{subsec:scheme1}.
In the following, we first show that all nodes can find their required files within their traffic cells whp by setting $\eta$ as in \eqref{eq:eta_value}.

\begin{lemma} \label{lemma:outage_3}
Suppose Regime IV and $\eta=\min\left(1,\beta+1-\frac{1}{\gamma-1}\right)-\epsilon_c$. Then all nodes are able to find their sources within their traffic cells 
whp.
\end{lemma}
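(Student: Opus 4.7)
The plan is to adapt the argument of Lemma \ref{lem:outage} to account for two failure modes introduced by restricting the cache content to a strict subset of the library: (i) the requested file lies in the tail $\mathcal{F}\setminus\mathcal{F}_{\sf sub}$, so it is not cached by any node in the network; and (ii) the requested file lies in $\mathcal{F}_{\sf sub}$ but no node in the user's traffic cell happens to cache it. For each node $i$, let $A_i$ be the event that node $i$ finds a source within its traffic cell, and split $A_i^c = D_i \cup C_i$ where $D_i = \{f_i \notin \mathcal{F}_{\sf sub}\}$ and $C_i = A_i^c \cap D_i^c$. By the union bound and the symmetry of the marginals in $i$, it then suffices to show that $n\,\mathbb{P}(D_1) \to 0$ and $n\,\mathbb{P}(C_1) \to 0$.

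For the tail term, since $\gamma > 1$ the Zipf normalizer is $\Theta(1)$, and a standard integral comparison gives
\[
\mathbb{P}(D_1) = \sum_{j > Mn_2}p_r(j) = \Theta\bigl((Mn_2)^{1-\gamma}\bigr).
\]
Substituting \eqref{eq:def_n_2}, one obtains $Mn_2 = a_2\,n^{\beta+1-\min(1,\beta+1-1/(\gamma-1))+\epsilon_c/2}$. When $\beta \leq 1/(\gamma-1)$, this equals $a_2\,n^{1/(\gamma-1)+\epsilon_c/2}$ and $n\,\mathbb{P}(D_1) = \Theta(n^{-(\gamma-1)\epsilon_c/2})$; when $\beta > 1/(\gamma-1)$, it equals $a_2\,n^{\beta+\epsilon_c/2}$ and $n\,\mathbb{P}(D_1) = \Theta(n^{1-(\beta+\epsilon_c/2)(\gamma-1)})$, whose exponent is strictly negative by the case hypothesis. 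In either case $n\,\mathbb{P}(D_1)\to 0$.

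For the cell-miss term, conditional on $f_1 \in \mathcal{F}_{\sf sub}$, each node in the cell independently caches $f_1$ with probability $M/(Mn_2) = 1/n_2$; invoking the lower bound \eqref{eq:lower_n_nodes} on the cell population from Lemma \ref{lemma:node_dis}(b) yields
\[
\mathbb{P}(C_1) \overset{\text{whp}}{\leq} \left(1-\tfrac{1}{n_2}\right)^{(1-\delta)n_2 n^{\epsilon_c/2}} \leq \exp\!\bigl(-(1-\delta)\,n^{\epsilon_c/2}\bigr),
\]
so $n\,\mathbb{P}(C_1)\to 0$ as well. Combining the two bounds with the union bound over $i \in [1:n]$ completes the proof.

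The main obstacle is the delicate calibration of $n_2$ and $\eta$: the $n^{\epsilon_c/2}$-fold excess of the per-cell population $\sim n_2\,n^{\epsilon_c/2}$ over the effective library size $n_2$ is exactly what produces the doubly-exponential decay of the cell-miss probability, while simultaneously $Mn_2$ is pushed past the $n^{1/(\gamma-1)}$ threshold by a polynomial margin, killing the Zipf tail. The hypothesis $\gamma > 1 + 1/\alpha$ is used precisely to ensure both $\mathcal{F}_{\sf sub}\subsetneq \mathcal{F}$ (i.e., $Mn_2 < m$ for small enough $\epsilon_c$) and the positivity of $\eta$ via \eqref{eq:condition_e_c}; verifying these two compatibility checks is the only genuinely non-routine bookkeeping, and the rest is a careful but standard combination of a Zipf tail estimate with the cell-concentration inequality already available from Lemma \ref{lemma:node_dis}(b).
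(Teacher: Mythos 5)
Your proof is correct and follows essentially the same route as the paper's: the same decomposition of the failure event into a Zipf-tail request (handled by the integral estimate on $\sum_{j>Mn_2}p_r(j)$ together with $\gamma>1+1/\alpha$) and a cell-miss event (handled by the per-node caching probability $1/n_2$ and the population lower bound \eqref{eq:lower_n_nodes}), combined via the union bound over the $n$ users. The only differences are cosmetic — you use $1-x\le e^{-x}$ where the paper invokes $(1-1/n_2)^{n_2}\to 1/e$, and you split the $\min$ into explicit cases rather than carrying it symbolically.
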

\begin{proof}
Denote $P=\sum_{i=1}^{Mn_2}p_r(i)$, where the definition of $n_2$ is given by \eqref{eq:def_n_2}.
For $i\in[1:n]$, denote $\mathcal{N}_i\subseteq[1:n]$ as the set of nodes in the traffic cell that node $i$ is included and $A_i$ as the event that node $i$ establishes its source node in $\mathcal{N}_i$.
Then the outage probability $\mathbb{P}(A_i^c)$ is given by
\begin{align}
\mathbb{P}(A_i^c)&=\mathbb{P}(\mbox{node $i$ requests $f_i\in\mathcal{F}_{\sf sub}$})\mathbb{P}(f_i\notin\cup_{j\in \mathcal{N}_i}\mathcal{M}_j|\mbox{node $i$ requests $f_i\in\mathcal{F}_{\sf sub}$})\nonumber\\
&+\mathbb{P}(\mbox{node $i$ requests $f_i\notin\mathcal{F}_{\sf sub}$})\nonumber\\
&=P\left(\frac{Mn_2-M}{Mn_2}\right)^{|\mathcal{N}_i|}+(1-P)\nonumber\\
&\overset{\text {whp}}{\leq}P\left(1-\frac{1}{n_2}\right)^{(1-\delta)n_2n^{\epsilon_c/2}}+(1-P),
\end{align}
where $|\mathcal{N}_i|$ denotes the cardinality of $\mathcal{N}_i$.
Here, the second equality holds since each node $i$ stores $M$ distinct files in its local memory $\mathcal{M}_i$, chosen uniformly at random from $\mathcal{F}_{\sf{sub}}$ independently of other nodes and the inequality holds from \eqref{eq:lower_n_nodes}.

Then, following the analysis in \eqref{eq:outage_prob}, we have:
\begin{align} \label{eq:outage_prob_3}
\mathbb{P}\left(\cap_{i\in[1:n]}A_i \right)&\overset{\text {whp}}{\geq} 1-n\left(P\left(1-\frac{1}{n_2 }\right)^{(1-\delta)n_2n^{\epsilon_c/2}}+(1-P)\right)\nonumber\\
&\geq 1-n\left(1-\frac{1}{n_2}\right)^{(1-\delta)n_2n^{\epsilon_c/2}}-n(1-P)\nonumber\\
&=1-n\left(\left(1-\frac{1}{n_2}\right)^{n_2}\right)^{(1-\delta)n^{\epsilon_c/2}}-n(1-P).
\end{align}

From \eqref{eq:converge_e} and the fact that $n_2\to\infty$ as $n\to\infty$, the term $n\left(\left(1-\frac{1}{n_2}\right)^{n_2}\right)^{(1-\delta)n^{\epsilon_c/2}}$ in \eqref{eq:outage_prob_3} convergeges to zero as $n$ increases. Furthermore,
\begin{align} \label{eq:outage_prob_4}
n(1-P)&\overset{(a)}{=}n\left(\frac{\sum_{i=Mn_2+1}^{m}i^{-\gamma}}{\sum_{i=1}^{m}i^{-\gamma}}\right)\nonumber\\
&\overset{(b)}{\leq}n\left(\frac{\int_{Mn_2}^{m}x^{-\gamma}d_x}{\int_{1}^{m}x^{-\gamma}d_x}\right)\nonumber\\
&\overset{(c)}{=}\frac{n^{\alpha(1-\gamma)+1}-n^{(\beta+1-\min(1,\beta+1-1/(\gamma-1))+\epsilon_c/2)(1-\gamma)+1}}{n^{\alpha(1-\gamma)}-1},
\end{align}
where $(a)$ follows from the definition of $P$, $(b)$ follows because $\sum_{i=a+1}^{b}i^{-\gamma}\leq \int_{a}^{b}x^{-\gamma}dx$ and $\sum_{i=a}^{b}i^{-\gamma}\geq \int_{a}^{b}x^{-\gamma}dx$, and $(c)$ follows from the definition of $n_2$.
Notice that $n^{\alpha(1-\gamma)}$ and $n^{\alpha(1-\gamma)+1}$ in \eqref{eq:outage_prob_4} converge to zero as $n$ increases since $\gamma> 1+\frac{1}{\alpha}$.
Also,
\begin{align}
n^{(\beta+1-\min(1,\beta+1-1/(\gamma-1))+\epsilon_c/2)(1-\gamma)+1}&\leq n^{(\beta+1-(\beta+1-1/(\gamma-1))+\epsilon_c/2)(1-\gamma)+1}\nonumber\\
&=n^{\frac{\epsilon_c/2}{1-\gamma}}
\end{align}
converges to zero as $n$ increases because $\frac{\epsilon_c/2}{1-\gamma}<0$.
Therefore, the term $n(1-P)$ in \eqref{eq:outage_prob_3} also converges to zero as $n$ increases.

In conclusion, from \eqref{eq:outage_prob_3}, $\mathbb{P}\left(\cap_{i\in[1:n]}A_i \right)$ converges to zero as $n$ increases.
\end{proof}

As proved in Lemma \ref{lemma:outage_3}, we set $\eta=\min\left(1,\beta+1-\frac{1}{\gamma-1}\right)-\epsilon_c$ from now on, which determines the size of each traffic cell guaranteeing no outage at all nodes whp.
Notice that Lemma \ref{lemma:aggregate rate} holds regardless of the file popularity distribution.
Hence, a non-vanishing aggregate rate is achievable for any hopping cell by TDMA between hopping cells with some constant reuse factor.
We then derive the same statement in Lemma \ref{lemma:num_sources} in the following lemma.

\begin{lemma} \label{lemma:num_sources2}
Suppose Regime IV and $\eta=\min\left(1,\beta+1-\frac{1}{\gamma-1}\right)-\epsilon_c$. Then each node can be a source node of at most $n^{\epsilon_c}$ nodes in its traffic cell whp.
\end{lemma}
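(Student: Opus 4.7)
The plan is to mimic the proof of Lemma~\ref{lemma:num_sources}, with two modifications to handle the new setup: (i) the caches are drawn uniformly from the subset $\mathcal{F}_{\sf sub}$ of the top $Mn_2$ most popular files rather than from the whole library, and (ii) requests follow a Zipf distribution rather than being uniform. For each fixed node $i$, let $X_i$ denote the number of nodes $j$ in the same traffic cell with $f_j \in \mathcal{M}_i$; this upper bounds the number of destinations that may select $i$ as their source, regardless of how ties are broken when there are multiple candidates. The goal is to show $\mathbb{P}(X_i > n^{\epsilon_c}) = o(1/n)$ and then conclude by a union bound over the $n$ nodes.

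The key probability is the marginal hit probability $\mathbb{P}(f_j \in \mathcal{M}_i)$. Since $\mathcal{M}_i$ is a uniform $M$-subset of the size-$Mn_2$ set $\mathcal{F}_{\sf sub}$, we have $\mathbb{P}(f \in \mathcal{M}_i) = 1/n_2$ for $f \in \mathcal{F}_{\sf sub}$ and $0$ otherwise, so
$$
\mathbb{P}(f_j \in \mathcal{M}_i) = \sum_{f \in \mathcal{F}_{\sf sub}} p_r(f) \cdot \frac{1}{n_2} = \frac{P}{n_2} \leq \frac{1}{n_2},
$$
where $P = \sum_{i=1}^{Mn_2} p_r(i) \leq 1$. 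Combined with the upper bound $n_1 \leq (1+\delta) n_2 n^{\epsilon_c/2}$ on the cell population from~\eqref{eq:upper_n_nodes}, this yields $\mathbb{E}[X_i] \leq (1+\delta) n^{\epsilon_c/2}$, a factor $n^{\epsilon_c/2}$ smaller than the target threshold $n^{\epsilon_c}$. A Chernoff bound in the same spirit as the one used in~\eqref{eq:num_source}, with $k = n^{\epsilon_c}$ and success parameter $P/n_2$, should then give $\mathbb{P}(X_i > n^{\epsilon_c}) \leq \exp(-\Theta(n^{\epsilon_c}))$, and the union bound over the $n$ nodes finishes the argument.

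The main obstacle is that, unlike in the uniform-request setting of Lemma~\ref{lemma:num_sources}, the indicators $\{\mathbf{1}[f_j \in \mathcal{M}_i]\}_{j}$ are not independent under the joint law: they share the single random cache $\mathcal{M}_i$. Conditionally on $\mathcal{M}_i$, they are i.i.d.\ Bernoulli with parameter $q_{\mathcal{M}_i} := \sum_{f \in \mathcal{M}_i} p_r(f)$, whose unconditional mean equals $P/n_2$ but which can spike when $\mathcal{M}_i$ happens to contain a very popular file. To make the Chernoff step rigorous, I will first condition on $\mathcal{M}_i$ and apply the binomial Chernoff bound for $\mathrm{Binomial}(n_1, q_{\mathcal{M}_i})$, and then separately control the upper tail of $q_{\mathcal{M}_i}$, showing $q_{\mathcal{M}_i} = O(P/n_2)$ whp uniformly in $i$. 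This is where the assumption $\gamma > 1 + 1/\alpha$ enters in an essential way: it ensures $\sum_f p_r(f)^2 = O(1)$, so that $q_{\mathcal{M}_i}$, viewed as a sum of $M$ samples without replacement from $\mathcal{F}_{\sf sub}$, has variance $O(1/n_2)$ and admits a Serfling-type concentration bound tight enough that, together with the choice of $\eta$ in~\eqref{eq:eta_value}, the $n$-fold union bound still vanishes.
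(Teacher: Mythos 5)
Your first two steps track the paper's own proof exactly: the paper also (implicitly) bounds the number of destinations of node $i$ by the number of cell-mates $j$ with $f_j\in\mathcal{M}_i$, and then writes the upper tail of a $\mathrm{Binomial}\left((1+\delta)n_2n^{\epsilon_c/2},\,\tfrac{1}{n_2}\right)$ at $k=n^{\epsilon_c}$ and applies the relative-entropy Chernoff bound, which is the computation you sketch. The paper, however, simply asserts this binomial form and never discusses the dependence of the indicators $\{\mathbf{1}[f_j\in\mathcal{M}_i]\}_j$ induced by the shared random cache $\mathcal{M}_i$; you are right that, unconditionally, these are a mixture of i.i.d.\ Bernoulli$(q_{\mathcal{M}_i})$ variables with $q_{\mathcal{M}_i}=\sum_{f\in\mathcal{M}_i}p_r(f)$, and that this is the crux of the argument.

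The gap is that your proposed repair of that step fails. The claim that $q_{\mathcal{M}_i}=O(P/n_2)$ whp uniformly in $i$ is false: under Zipf with $\gamma>1$ the normalizing sum converges, so $p_r(1)=\Theta(1)$, and since $\mathbb{P}(1\in\mathcal{M}_i)=M/(Mn_2)=1/n_2$, about $n/n_2$ nodes (a number that diverges polynomially) satisfy $q_{\mathcal{M}_i}\geq p_r(1)=\Theta(1)\gg P/n_2$. Your own variance bound confirms the obstruction rather than removing it: $\sum_f p_r(f)^2=O(1)$ (which holds trivially for any distribution, so $\gamma>1+1/\alpha$ does no work here) gives $\mathrm{Var}(q_{\mathcal{M}_i})=O(1/n_2)$, i.e., a standard deviation of order $1/\sqrt{n_2}$, which is a factor $\sqrt{n_2}$ \emph{larger} than the mean $P/n_2$; no Serfling/Chebyshev-type argument can then concentrate $q_{\mathcal{M}_i}$ at scale $P/n_2$, let alone with failure probability $o(1/n)$ as needed for the union bound. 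Nor is this a removable artifact of your route: in the sub-regime where $\beta+1-1/(\gamma-1)<1$ (so that $n_2$ grows polynomially), a node caching file $1$ receives, even after uniform tie-breaking among the roughly $n^{\epsilon_c/2}$ holders of that file in its traffic cell, about $p_r(1)N/n^{\epsilon_c/2}=\Theta(n_2)\gg n^{\epsilon_c}$ requests. So the conditional decomposition is the right diagnostic, but the concentration claim you hang the proof on is false, and the statement cannot be established along these lines without modifying the delivery (e.g., aggregating same-file requests within a cell) or the per-source load accounting; the paper's proof avoids the issue only by not confronting it.
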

\begin{proof}
Let $B_i(k)$ denote the event that node $i$ becomes a source node for less than $k$ nodes.
From the same analysis in \eqref{eq:num_source}, we have
\begin{align}
\mathbb{P}\left(\cap_{i\in[1:n]}B_i(k)\right)&\overset{\text {whp}}{\geq}1-n\sum_{j=k}^{(1+\delta)n_2n^{\epsilon_c/2}}{(1+\delta)n_2n^{\epsilon_c/2} \choose j}\left(\frac{1}{n_2}\right)^j\left(1-\frac{1}{n_2}\right)^{(1+\delta)n_2n^{\epsilon_c/2}-j}\nonumber\\
&\geq 1-n\exp\left(-(1+\delta)n_2n^{\epsilon_c/2} D\left(\frac{k}{(1+\delta)n_2n^{\epsilon_c/2}}\bigg\|\frac{1}{n_2}\right)\right)\nonumber\\
&=1-n\exp\left(-k\log\left(\frac{k}{(1+\delta)n^{\epsilon_c/2}}\right)\right)\nonumber\\
&\cdot\exp\left(-((1+\delta)n_2n^{\epsilon_c/2}-k)\log\left(\frac{(1+\delta)n_2n^{\epsilon_c/2}-k}{(1+\delta)n_2n^{\epsilon_c/2}-(1+\delta)n^{\epsilon_c/2}}\right)\right)\nonumber\\
&=1-\underbrace{n\exp(-k)\left(\frac{k}{(1+\delta)n^{\epsilon_c/2}}\right)^{-\ln (2)}}_{:=C}\nonumber\\
& \cdot\underbrace{\exp(-((1+\delta)n_2n^{\epsilon_c/2}-k))\left(\frac{(1+\delta)n_2n^{\epsilon_c/2}-k}{(1+\delta)n_2n^{\epsilon_c/2}-(1+\delta)n^{\epsilon_c/2}}\right)^{-\ln (2)}}_{:=D}
\end{align}
if $\frac{1}{n_2}<\frac{k}{(1+\delta)n_2n^{\epsilon_c/2}}<1$, where $D(a\| b)=a\log (\frac{a}{b})+(1-a)\log(\frac{1-a}{1-b})$ denotes the relative entropy for $a,b\in(0,1)$.

Suppose that $k=n^{\epsilon_c}$.
Then the condition $\frac{1}{n_2}<\frac{k}{(1+\delta)n_2n^{\epsilon_c/2}}<1$ is satisfied because $(1+\delta)n^{\epsilon_c/2}<n^{\epsilon_c}<(1+\delta)n^{1-\min(1,\beta+1-1/(\gamma-1))+\epsilon_c}$.
Furthermore, we have
\begin{align}
C&= n \exp(-n^{\epsilon_c})\left(\frac{n^{\epsilon_c/2}}{1+\delta}\right)^{-\ln (2)}\to 0 \mbox{ as } n \to \infty.
\end{align}
Similarly, from the definition of $n_2$ in \eqref{eq:def_n_2},
\begin{align}
D&= \exp(-((1+\delta)n^{1-\min(1,\beta+1-1/(\gamma-1))+\epsilon_c}-n^{\epsilon_c}))\nonumber\\
&\cdot\left(\frac{(1+\delta)n^{1-\min(1,\beta+1-1/(\gamma-1))+\epsilon_c}-n^{\epsilon_c}}{(1+\delta)n^{1-\min(1,\beta+1-1/(\gamma-1))+\epsilon_c}-(1+\delta)n^{\epsilon_c/2}}\right)^{-\ln (2)}\to 0 \mbox{ as } n \to \infty
\end{align}
because $((1+\delta)n^{1-\min(1,\beta+1-1/(\gamma-1))+\epsilon_c}-n^{\epsilon_c})\to\infty$ as $n\to\infty$ and $\frac{(1+\delta)n^{1-\min(1,\beta+1-1/(\gamma-1))+\epsilon_c}-n^{\epsilon_c}}{(1+\delta)n^{1-\min(1,\beta+1-1/(\gamma-1))+\epsilon_c}-(1+\delta)n^{\epsilon_c/2}}\to 1$ as $n\to\infty$.
Therefore, $\mathbb{P}\left(\cap_{i\in[1:n]}B_i(n^{\epsilon_c})\right)\to 1$ as $n\to\infty$, meaning that each node becomes a source node of at most $n^{\epsilon_c}$ nodes whp.
\end{proof}

\begin{lemma} \label{lemma:num_datapaths2}
Suppose Regime IV and $\eta=\min\left(1,\beta+1-\frac{1}{\gamma-1}\right)-\epsilon_c$. For $\epsilon>0$ arbitrarily small, each hopping cell is required to carry at
most $n^{\frac{1-\min(1,\beta+1-1/(\gamma-1))}{2}+\epsilon}$  data paths whp.
\end{lemma}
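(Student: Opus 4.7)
The plan is to mirror the argument of Lemma \ref{lemma:num_datapaths}, replacing the invocation of Lemma \ref{lemma:num_sources} by Lemma \ref{lemma:num_sources2} and using the modified traffic-cell exponent $\eta=\min(1,\beta+1-1/(\gamma-1))-\epsilon_c$ dictated by Lemma \ref{lemma:outage_3}. Fix an arbitrary hopping cell and let $N_{\sf hdp}$ denote the number of horizontal data paths (HDPs) that traverse it. An HDP generated by a source node $u$ can cross this hopping cell only if $u$ lies in the horizontal strip of hopping cells aligned with the fixed hopping cell inside the same traffic cell. This strip has horizontal extent at most $\sqrt{a_{\sf c}}$ and vertical extent $\sqrt{a_{\sf h}}$, so its area is at most $\sqrt{a_{\sf c} a_{\sf h}}$.

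By Lemma \ref{lemma:node_dis}(a), the strip comprises $\sqrt{a_{\sf c}/a_{\sf h}}$ hopping cells, each containing at most $4\log n$ nodes whp, so the total number of nodes in the strip is whp at most a constant times $n^{(1-\eta)/2}\sqrt{\log n}$. Applying Lemma \ref{lemma:num_sources2}, each such node can serve as a source for at most $n^{\epsilon_c}$ destinations whp. Multiplying the two bounds and substituting $\eta = \min(1,\beta+1-1/(\gamma-1)) - \epsilon_c$ yields
\[
N_{\sf hdp} \overset{\text{whp}}{\leq} C \cdot n^{\frac{1-\min(1,\beta+1-1/(\gamma-1))}{2} + \frac{3\epsilon_c}{2}} \sqrt{\log n}
\]
for some constant $C>0$. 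The identical argument applied to vertical strips yields the same bound for the number $N_{\sf vdp}$ of vertical data paths traversing the cell. A union bound over the polynomially many hopping cells preserves the whp statement uniformly across all cells, so the total $N_{\sf hdp}+N_{\sf vdp}$ is controlled by the same expression up to a factor of two.

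To conclude, given $\epsilon>0$, one chooses $\epsilon_c$ small enough that $3\epsilon_c/2$ together with the subpolynomial factor $\sqrt{\log n}$ is absorbed into $n^{\epsilon}$, producing the claimed bound $n^{(1-\min(1,\beta+1-1/(\gamma-1)))/2+\epsilon}$. I do not anticipate any genuine obstacle: all of the concentration content (node counts in cells, number of destinations per source) is inherited from Lemmas \ref{lemma:node_dis}, \ref{lemma:num_sources2}, and \ref{lemma:outage_3}, while the geometric strip argument is a direct translation of the one already carried out in Lemma \ref{lemma:num_datapaths}. The only real bookkeeping subtlety is that $\epsilon_c$ is a design parameter of the scheme, so it must be fixed as a function of the desired target accuracy $\epsilon$ before the file placement is performed, which is consistent with how $\epsilon_c$ was introduced at the beginning of Section \ref{sec:improved_throughput}.
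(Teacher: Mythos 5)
Your proposal is correct and follows essentially the same route as the paper: it reuses the strip-area argument of Lemma \ref{lemma:num_datapaths} (equations \eqref{eq:num_nodes_area}--\eqref{eq:num_nodes}) with Lemma \ref{lemma:num_sources2} supplying the $n^{\epsilon_c}$ bound on destinations per source and the new value of $\eta$ from Lemma \ref{lemma:outage_3}, then absorbs the $\epsilon_c$-dependent and logarithmic factors into $n^{\epsilon}$. Your exponent bookkeeping (the $3\epsilon_c/2$ term) is in fact slightly more careful than the paper's displayed intermediate bound, and your closing remarks on the union bound over cells and on fixing $\epsilon_c$ in terms of $\epsilon$ are consistent with the paper's setup.
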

\begin{proof}
Let $N_{\sf hdp}$ denote the number of HDPs that must be carried by an arbitrary hopping cell.
From the same analysis in \eqref{eq:num_nodes_area} and \eqref{eq:num_nodes} and Lemma \ref{lemma:num_sources2}, we have
\begin{align}
N_{\sf hdp}&\overset{\text{whp}}{\leq}
=n^{\frac{1-\min(1,\beta+1-1/(\gamma-1))}{2}+\epsilon_c}\sqrt{2\log n}.
\end{align}
The same analysis holds for VDPs.
In conclusion, each hopping cell carries at most $n^{\frac{1-\min(1,\beta+1-1/(\gamma-1))}{2}+\epsilon}$ data paths whp for $\epsilon>0$ arbitrarily small, which completes the proof.
\end{proof}

We are now ready to prove that \eqref{eq:improved_throughput_rewrite} is achievable whp for Regime IV.
From Lemma~\ref{lemma:outage_3}, every node can find its source node within its traffic cell whp. From Lemma~\ref{lemma:aggregate rate}, setting $J=\left(2\lceil(1+\Delta)\sqrt{5} \rceil+1\right)^2$, each hopping cell is able to achieve the aggregate rate in \eqref{eq:agg_rate}.
Furthermore, from Lemma \ref{lemma:num_datapaths2}, the number of data paths that each hopping cell needs to perform is upper bounded by
\begin{equation} \label{eq:num_datapaths2}
n^{\frac{1-\min(1,\beta+1-1/(\gamma-1))}{2}+\epsilon'}
\end{equation}
whp for $\epsilon'$ arbitrarily small.
Therefore, an achievable per-node throughput is given by at least \eqref{eq:agg_rate} divided by  \eqref{eq:num_datapaths2} whp.
In conclusion, \eqref{eq:improved_throughput_rewrite} is achievable whp for Regime IV.

\section{Concluding Remarks}\label{sec:conc}

We considered a wireless ad-hoc network in which nodes have cached information from a library of possible files. For such network, we proposed an order-optimal caching policy (i.e., file placement policy) and multihop transmission protocol for a broad class of heavy-tailed popularity distributions including a Zipf distribution with exponent less than one. Interestingly, we showed that a distributed uniform random caching is order-optimal {for the parameter regimes of interest} as long as the total number of files in the library is less than the overall caching memory size in the network. i.e., $\alpha-\beta\in(0,1]$. Also, it was shown that a multihop transmission provides a significant throughput gain over one-hop direct transmission as in the conventional wireless ad-hoc networks.
As a future work, the complete characterization of the optimal throughput scaling laws for  this network with random demands following a
Zipf distribution with an arbitrary exponent $\gamma$ (in particular, with $\gamma \geq 1$) remains to be determined.
In this regime, decentralized uniform random caching over a subset of most probable files is generally not order-optimal, and gains can be achieved by
more refined random decentralized caching policies. Whether these can achieve the same scaling laws of the deterministic centralized strategy of
\cite{Gitzenisr:13} in all regimes remains also to be seen.



\begin{thebibliography}{10}
\providecommand{\url}[1]{#1}
\csname url@samestyle\endcsname
\providecommand{\newblock}{\relax}
\providecommand{\bibinfo}[2]{#2}
\providecommand{\BIBentrySTDinterwordspacing}{\spaceskip=0pt\relax}
\providecommand{\BIBentryALTinterwordstretchfactor}{4}
\providecommand{\BIBentryALTinterwordspacing}{\spaceskip=\fontdimen2\font plus
\BIBentryALTinterwordstretchfactor\fontdimen3\font minus
  \fontdimen4\font\relax}
\providecommand{\BIBforeignlanguage}[2]{{%
\expandafter\ifx\csname l@#1\endcsname\relax
\typeout{** WARNING: IEEEtran.bst: No hyphenation pattern has been}%
\typeout{** loaded for the language `#1'. Using the pattern for}%
\typeout{** the default language instead.}%
\else
\language=\csname l@#1\endcsname
\fi
#2}}
\providecommand{\BIBdecl}{\relax}
\BIBdecl

\bibitem{CISCO}
``Cisco visual networking index: {G}lobal mobile data traffic forecast update,
  2-12-2017,'' \emph{Cisco Public Information}, 2015.

\bibitem{shanmugam2013}
K.~Shanmugam, N.~Golrezaei, A.~G. Dimakis, A.~F. Molisch, and G.~Caire,
  ``Femtocaching: Wireless content delivery through distributed caching
  helpers,'' \emph{IEEE Trans. Inf. Theory}, vol.~59, pp. 8402--8413, Dec.
  2013.

\bibitem{ji2013optimal}
M.~Ji, G.~Caire, and A.~F. Molisch, ``Optimal throughput-outage trade-off in
  wireless one-hop caching networks,'' in \emph{Proc. {IEEE} Int. Symp. on
  Information Theory (ISIT)}, Istanbul, Turkey, Jul. 2013.

\bibitem{ji2013throughput}
------, ``The throughput-outage tradeoff of wireless one-hop caching
  networks,'' \emph{arXiv preprint arXiv:1312.2637}, 2013.

\bibitem{GuptaKumar:00}
P.~Gupta and P.~R. Kumar, ``The capacity of wireless networks,'' \emph{{IEEE}
  Trans. Inf. Theory}, vol.~46, pp. 388--404, Mar. 2000.

\bibitem{kulkarni2004deterministic}
S.~Kulkarni and P.~Viswanath, ``A deterministic approach to throughput scaling
  in wireless networks,'' \emph{IEEE Trans. Inf. Theory}, vol.~50, pp.
  1041--1049, Jun. 2004.

\bibitem{li2009multicast}
X.-Y. Li, ``Multicast capacity of wireless ad hoc networks,'' \emph{IEEE/ACM
  Transactions on Networking}, vol.~17, pp. 950--961, Jun. 2009.

\bibitem{franceschetti2007closing}
M.~Franceschetti, O.~Dousse, D.~Tse, and P.~Thiran, ``Closing the gap in the
  capacity of wireless networks via percolation theory,'' \emph{IEEE Trans.
  Inf. Theory}, vol.~53, pp. 1009--1018, Mar. 2007.

\bibitem{xie2004network}
L.-L. Xie and P.~R. Kumar, ``A network information theory for wireless
  communication: Scaling laws and optimal operation,'' \emph{IEEE Trans. Inf.
  Theory}, vol.~50, pp. 748--767, May 2004.

\bibitem{xue2005transport}
F.~Xue, L.-L. Xie, and P.~R. Kumar, ``The transport capacity of wireless
  networks over fading channels,'' \emph{IEEE Trans. Inf. Theory}, vol.~51, pp.
  834--847, Mar. 2005.

\bibitem{shakkottai2010multicast}
S.~Shakkottai, X.~Liu, and R.~Srikant, ``The multicast capacity of large
  multihop wireless networks,'' \emph{IEEE/ACM Trans. Networking}, vol.~18, pp.
  1691--1700, Dec. 2010.

\bibitem{niesen2010balanced}
U.~Niesen, P.~Gupta, and D.~Shah, ``The balanced unicast and multicast capacity
  regions of large wireless networks,'' \emph{IEEE Trans. Inf. Theory},
  vol.~56, pp. 2249--2271, May 2010.

\bibitem{ozgur2007hierarchical}
A.~{\"O}zg{\"u}r, O.~L{\'e}v{\^e}que, and D.~Tse, ``Hierarchical cooperation
  achieves optimal capacity scaling in ad hoc networks,'' \emph{IEEE Trans.
  Inf. Theory}, vol.~53, pp. 3549--3572, Oct. 2007.

\bibitem{ozgur2010}
A.~{\"O}zg{\"u}r and O.~L{\'e}v{\^e}que, ``Throughput-delay tradeoff for
  hierarchical cooperation in ad hoc wireless networks,'' \emph{IEEE Trans.
  Inf. Theory}, vol.~56, pp. 1369--1377, Mar. 2010.

\bibitem{hong2015beyond}
S.-N. Hong and G.~Caire, ``Beyond scaling laws: On the rate performance of
  dense device-to-device wireless networks,'' \emph{IEEE Trans. Inf. Theory},
  vol.~61, pp. 4735--4750, Sep. 2015.

\bibitem{Gitzenisr:13}
S.~Gitzenis, G.~S. Paschos, and L.~Tassiulas, ``Asymptotic laws for joint
  content replication and delivery in wireless networks,'' \emph{{IEEE} Trans.
  Inf. Theory}, vol.~59, pp. 2760--2776, May 2013.

\bibitem{JiCaireMolisch2013}
M.~Ji, G.~Caire, and A.~F. Molisch, ``Wireless device-to-device caching
  networks: Basic principles and system performance,'' \emph{arXiv preprint
  arXiv:1305.5216}, 2013.

\bibitem{ji2013fundamental}
------, ``Fundamental limits of distributed caching in {D2D} wireless
  networks,'' in \emph{Proc. {IEEE} {Information Theory Workshop (ITW)}},
  Seville, Spain, Sep. 2013.

\bibitem{maddah2013fundamental}
M.~A. Maddah-Ali and U.~Niesen, ``Fundamental limits of caching,'' \emph{{IEEE}
  Trans. Inf. Theory}, vol.~60, pp. 2856--2867, May 2014.

\bibitem{maddah2014decentralized}
------, ``Decentralized coded caching attains order-optimal memory-rate
  tradeoff,'' \emph{{IEEE/ACM} Trans. Networking}, vol.~23, pp. 1029--1040,
  Aug. 2014.

\bibitem{niesen2013coded}
U.~Niesen and {M. A. Maddah-Ali}, ``Coded caching with nonuniform demands,''
  \emph{arXiv preprint arXiv:1308.0178}, 2013.

\bibitem{ji2015random}
M.~Ji, A.~Tulino, J.~Llorca, and G.~Caire, ``Order-optimal rate of caching and
  coded multicasting with random demands,'' \emph{arXiv preprint
  arXiv:1502.03124}, 2015.

\bibitem{ji2014caching}
------, ``Caching and coded multicasting: {M}ultiple groupcast index coding,''
  in \emph{Proc. {IEEE} Global Conference on Signal and Information Processing
  (GlobalSIP)}, Atlanta, GA, Dec. 2014.

\bibitem{karamchandani2014hierarchical}
N.~Karamchandani, U.~Niesen, M.~A. Maddah-Ali, and S.~Diggavi, ``Hierarchical
  coded caching,'' \emph{arXiv preprint arXiv:1403.7007}, 2014.

\bibitem{hachem2014multi}
J.~Hachem, N.~Karamchandani, and S.~Diggavi, ``Multi-level coded caching,''
  \emph{arXiv preprint arXiv:1404.6563}, 2014.

\bibitem{Breslau:99}
L.~Breslau, P.~Cao, L.~Fan, G.~Phillips, and S.~Shenker, ``Web caching and
  zipf-like distributions: {E}vidence and implications,'' in \emph{Proc. {IEEE}
  {INFOCOM}}, New York, NY, Mar. 1999.

\bibitem{ji2014fundamental}
M.~Ji, G.~Caire, and A.~F. Molisch, ``Fundamental limits of caching in wireless
  {D2D} networks,'' \emph{arXiv preprint arXiv:1405.5336}, 2014.

\bibitem{ji2015caching}
M.~Ji, A.~Tulino, J.~Llorca, and G.~Caire, ``Caching-aided coded multicasting
  with multiple random requests,'' in \emph{Proc. {IEEE} {Information Theory
  Workshop (ITW)}}, Jerusalem, Israel, Apr. 2015.

\bibitem{ji2015combination}
M.~Ji, M.~F. Wong, A.~Tulino, J.~Llorca, G.~Caire, M.~Effros, and M.~Langberg,
  ``On the fundamental limits of caching in combination networks,'' in
  \emph{Proc. {IEEE} {International Workshop on Signal Processing Advances in
  Wireless Communications (SPAWC)}}, Stockholm, Sweden, Jun. 2015.

\bibitem{shariatpanahi2015}
S.~P. Shariatpanahi, S.~A. Motahari, and B.~H. Khalaj, ``Multi-server coded
  caching,'' \emph{arXiv preprint arXiv:1503.00265}, 2015.

\bibitem{Knuth:76}
D.~E. Knuth, ``Big {O}micron and big {O}mega and big {T}heta,'' \emph{ACM
  SIGACT News}, vol.~8, pp. 18--24, Apr.-Jun. 1976.

\bibitem{shanmugam2014}
K.~Shanmugam, M.~Ji, A.~Tulino, J.~Llorca, and A.~G. Dimakis, ``Finite length
  analysis of caching-aided coded multicasting,'' in \emph{Proc. 52nd Annu.
  Allerton Conf. Communication, Control, and Computing}, Monticello, IL, Oct.
  2014.

\bibitem{Tijms:03}
H.~Tijms, \emph{A first course in stochastic models}.\hskip 1em plus 0.5em
  minus 0.4em\relax John Wiley \& Sons, 2003.

\bibitem{GamalMammenPrabhakarShah:06}
A.~{El Gamal}, J.~Mammen, B.~Prabhakar, and D.~Shah, ``Optimal
  throughput--delay scaling in wireless networks---{P}art {I}: {T}he fluid
  model,'' \emph{{IEEE} Trans. Inf. Theory}, vol.~52, pp. 2568--2592, Jun.
  2006.

\end{thebibliography}
\end{document}